\documentclass[11pt]{article}
\usepackage{fullpage}
\usepackage{authblk}

\bibliographystyle{plainnat}
\usepackage{amsmath,amssymb,mathrsfs,mathtools,amsfonts,amsthm}
\usepackage[square,sort,comma,numbers]{natbib}
\allowdisplaybreaks

\newtheorem{theorem}{Theorem}
\newtheorem{lemma}{Lemma}





\usepackage{comment,color}

\usepackage{algpseudocode}

%

%
%


\begin{document}


\title{Smoothed and Average-case Approximation Ratios of Mechanisms: Beyond the Worst-case  Analysis}


\author[1]{Xiaotie Deng}
\author[1]{Yansong Gao}
\author[2]{Jie Zhang\thanks{jie.zhang@soton.ac.uk}}
\affil[1]{Shanghai Jiao Tong University, China}
\affil[2]{University of Southampton, U.K.}
\date{}


\maketitle
\begin{abstract}
The approximation ratio has become one of the dominant measures in mechanism design problems. In light of analysis of algorithms, we define the \emph{smoothed approximation ratio} to compare the performance of the optimal mechanism and a truthful mechanism when the inputs are subject to random perturbations of the worst-case inputs, and define the \emph{average-case approximation ratio} to compare the performance of these two mechanisms when the inputs follow a distribution. 
For the one-sided matching problem, \citet{FFZ:14} show that, amongst all truthful mechanisms, \emph{random priority} achieves the tight approximation ratio bound of $\Theta(\sqrt{n})$.
We prove that, despite of this worst-case bound, random priority has a \emph{constant smoothed approximation ratio}.
This is, to our limited knowledge, the first work that asymptotically differentiates the smoothed approximation ratio from the worst-case approximation ratio for mechanism design problems.
For the average-case, we show that our approximation ratio can be improved to $1+e$.  
These results partially explain why random priority has been successfully  used in practice, although in the worst case the optimal social welfare is  $\Theta(\sqrt{n})$ times of what random priority achieves. 
These results also pave the way for further studies of smoothed and average-case analysis for approximate mechanism design problems, beyond the worst-case analysis.
 \end{abstract}

\section{Introduction}
Algorithmic mechanism design \citep{NR99, DBLP:journals/geb/NisanR01} deals with optimization problems where the input is provided by self-interested agents that participate in the mechanism by reporting their private information. If it best serves their purpose, they might have incentives to report false information. The goal of the designer is to motivate agents to always report truthfully. At the same time, the mechanism designer aims to optimize some objective function over the agents' reports, subject to a polynomial-time implementability constraint. Examples of applications include scheduling problems~\citep{DBLP:journals/mst/Koutsoupias14, DBLP:journals/algorithmica/ChristodoulouKV09, DBLP:conf/mfcs/KoutsoupiasV07}, facility location problems \citep{PT:09,DBLP:conf/aaai/Filos-RatsikasL15}, kidney exchange problems~\citep{ashlagi2010mix}, assignment problems~\citep{DG:10}, one-sided matching \citep{FFZ:14}, resource allocation \citep{guo2010strategyproof}, and auction design \citep{DBLP:conf/sigecom/HartlineR09,DBLP:conf/focs/AlaeiHNPY15,DBLP:journals/jacm/LehmannOS02,DBLP:journals/geb/MualemN08,DBLP:conf/soda/ArcherPTT03}. For a more detailed investigation, we refer the reader to \citep{NRTV07}. The canonical measure of evaluating how well a truthful mechanism approximately optimizes the objective is the \emph{approximation ratio} \citep{PT:09}. Given any instance, the approximation ratio compares the performance of the optimal mechanism (which always outputs an optimal solution but is not necessarily truthful) against the performance of a truthful mechanism. The worst-case ratio is the largest value of this ratio, amongst all possible inputs.


The difference of mechanism design to algorithm design is the additional constraint of motivating agents to act truthfully in the mechanism. In algorithm design problems, in contrast, the inputs are not controlled by rational agents, and the \emph{worst-case time complexity} is one of the dominant measures to evaluate the performance of algorithms. However, this is a very pessimistic measure. On one hand, if it is possible to  obtain a small worst-case complexity, then it is a very strong guarantee on the performance of the algorithm no matter what input is given. On the other hand, there are algorithms that perform well in practice but have a high worst-case complexity bound.
To address this disparity, the \emph{average-case} time complexity of an algorithm is an alternative measure to the worst-case complexity; it measures the time complexity of the algorithm, averaged over all possible inputs when they follow a certain distribution.  The main motivation of studying average-case complexity is that some algorithms may have to run for a high-order polynomial-time or even exponential time in the worst case, but the input for this to happen  may rarely or never occur in practice. So instead of only focusing on  unrealistic worst-case instances, researchers consider the performance of the algorithm on average. One criticism to average-case complexity, however, is that it requires assumptions about the distribution of inputs, and these are not guaranteed to hold in practice. Even for the same algorithm, when it is applied to different application areas, the real-world distribution may vary.  In light of this, \emph{smoothed complexity} is a hybrid of the worst-case and average-case analysis that inherits advantages of both. Specifically, it measures the expected performance of algorithms under slight random perturbations of the worst-case inputs. If the smoothed complexity of an algorithm is low, then it is unlikely that the algorithm will take a long time to solve practical instances whose data are subject to slight noise and imprecisions.  Although average-case and smoothed analysis are usually more complex than the worst-case analysis, they provide different measures from the worst-case complexity.


In this paper,  we extend the classical worst-case approximation ratio analysis of mechanisms to the smoothed approximation ratio and average-case approximation ratio  analysis. \emph{Average-case approximation ratio},  on average, measures the performance of a truthful mechanism in approximately maximizing social welfare (or minimizing social cost) against the performance of the optimal mechanism; the \emph{smoothed approximation ratio} compares the performance of these two mechanisms when the inputs are subject to random perturbations of worst-case inputs. Therefore, the central questions in smoothed analysis and average-case analysis in mechanism design framework are:

\vspace{3mm}
\emph{Given a mechanism design problem, in case the worst-case approximation ratio is asymptotically large, are there any mechanisms that achieve a constant smoothed approximation ratio and a constant average-case approximation ratio? Given a concrete mechanism for the problem, does it have a  constant smoothed approximation ratio and a constant average-case approximation ratio?}
\vspace{2mm}

As a first step of extending the worst-case analysis to the smoothed and average-case analysis in mechanism design, we study the problem of approximate social welfare maximization (without money) in the \emph{one-sided matching settings} (also referred to as the  \emph{house allocation problem}).  These settings consider the fundamental resource allocation problem of assigning items to agents, such that each agent receives exactly one item. It has numerous applications, such as assigning workers to shifts, students to courses, and patients to doctor appointments. In this problem, agents are asked to provide their preferences over items. In game-theoretic terms, these are the agents' von Neumann-Morgenstern utilities~\citep{vNM1953,von2007theory}. Social welfare is the sum of all agents' utilities. 
It is easy to see that agents, as self-interested identities, have an incentive to misreport their preferences if they can benefit from this behavior. The \emph{random priority} mechanism, apart from being truthful, also satisfies the desirable properties of anonymity and ex-post Pareto efficiency.  In term of social welfare maximization, \citet{FFZ:14} show that amongst all truthful mechanisms, random priority achieves the worst-case approximation ratio  tight bound of $\Theta(\sqrt{n})$. That is, random priority can guarantee an upper bound of $O(\sqrt{n})$ and there is a worst-case instance on which no mechanism can do better than $\Omega(\sqrt{n})$. Nevertheless, the tight bound instance has a very unique structure such that it is very unlikely to happen in practice. Therefore, we are in great request to  understand  how well  random priority  performs on average, when the instances are sampled from a certain distribution. How well it performs under some random perturbations of the worst-case inputs? We address these questions in this paper.



\subsection{Our contribution}

To the best of our knowledge, this is the first work that asymptotically differentiates the smoothed approximation ratio from the worst-case approximation ratio for  mechanism design problems. In particular, we show the following results:
\begin{itemize}
\item The random priority mechanism has a constant smoothed approximation ratio.
\item The average-case approximation ratio of random priority is upper bounded by a constant $1+e$, when agents' valuations are drawn from the uniform distribution $\mathrm{U}(0,1)$.
\end{itemize}
\noindent
Our results imply that, although in the worst-case the optimal social welfare is $\Theta(\sqrt{n})$ times of the social welfare attainable by random priority, under polynomial small perturbation around the worst-case inputs and on average, random priority achieves a constant factor of the optimal social welfare.

In \cite{FFZ:14}, the tight bound examples for the worst-case approximation ratio have a unique structure where the preferences of all agents over the items have the same ordering, and the values are all close to either 1 or 0. From the average-case perspective, these examples rarely happen if the valuations are independently and identically drawn from a uniform distribution. From the smoothed analysis perspective, this unanimous ordering would break up after any random perturbation. This is the high-level intuition behind why the smoothed and average-case approximation ratios could be asymptotically different from the worst-case approximation ratio.


\subsection{Related work}

\citet{DBLP:conf/stoc/SpielmanT01,DBLP:journals/mp/SpielmanT03} first propose the methodology of smoothed analysis of algorithms with the attempt to explain why the simplex algorithms usually run in polynomial time in practice. They start with the shadow-vertex pivot rule and show its polynomial smoothed complexity. Since then, smoothed analysis has been studied on a variety of different problems and algorithms, including linear programming \citep{DBLP:conf/soda/BlumD02,DBLP:journals/siammax/SankarST06,DBLP:journals/mp/SpielmanT03}, online and approximation algorithms \citep{DBLP:conf/dagstuhl/BecchettiLMSV05,DBLP:journals/algorithmica/BlaserMR13,DBLP:journals/tcs/SchaferS05}, searching and sorting \citep{DBLP:conf/mfcs/BanderierBM03,DBLP:journals/algorithmica/FouzKMJ12,DBLP:journals/tcs/MantheyR07}, game theory \citep{DBLP:conf/icalp/BorosEFGMM11,DBLP:journals/jacm/ChenDT09},  local search \citep{DBLP:journals/siamcomp/ArthurV09,DBLP:journals/algorithmica/EnglertRV14}, clustering and knapsack problem \citep{DBLP:journals/it/MantheyR11}. A comprehensive survey can be found in \citep{DBLP:journals/cacm/SpielmanT09}.

For average-case analysis of algorithms, different results have been obtained in, for example,  quicksort for sorting problem \citep{DBLP:books/daglib/0023376} and the simplex algorithm for solving linear programming \citep{DBLP:journals/mmor/Borgwardt82}. We refer the reader to \citep{Szpankowsk} for a comprehensive survey.

In the presence of incentives, the one-sided matching problem  was originally defined by  \citet{HZ:79}, and has been studied extensively ever since \cite{ZHOU:90, BM:01, SVE:99, DG:10,mennle2014axiomatic}.  We refer the interested reader to  surveys on this problem \citep{AS:13,SU:11}. The \emph{random priority} mechanism, also known as \emph{random serial dictatorship}, has been extensively studied  \cite{BM:01, abdulkadirouglu1998random, Sv99}. It has also been widely used in practice, for example, the supplementary round of school student assignment mechanism in New York City, is shown to be equivalent to a random priority mechanism \cite{PathakSethuraman}.

In the Bayesian auction design literature \cite{DBLP:journals/sigecom/ChawlaS14, DBLP:conf/stoc/HartlineL10}, the focus is on how well a truthful mechanism can approximately maximize the expected revenue, when instances are taken from the entire input space. More specifically, the dominant approach in the study of Bayesian auction design is the \emph{ratio of expectations}. One disadvantage of this approach is that it does not directly compare the performance of the two mechanisms on specific inputs. To address this problem, in what follows, we present a different approach, the \emph{expectation of the ratio}, and  compare them in more detail.


\section{Preliminaries}

We study the one-sided matching problem that consists of $n$ agents and $n$ indivisible  items. 
All the agents are endowed with von Neumann - Morgenstern utilities over the items. The VNM utility theorem states that any rational agent whose preference  satisfies four axioms, namely \emph{completeness}, \emph{transitivity}, \emph{continuity}, and \emph{independence}, is endowed with a utility function to represent its preference. These agents  report their preferences to a mediator; based on their report, the problem is to allocate items to agents, according to a random permutation such that every agent receives exactly one item.

In this paper, we adopt the \emph{unit-range}  canonical representation of agents' valuation \cite{Barbera10, ZHOU:90}. 
That is, for  the utility $a_{ij}$ of agent $i$ receiving item $j$, we have $\max_j \{a_{ij}\}=1, \min_j \{a_{ij}\}=0$, $\forall i\in [n]$ \footnote{We note here that our model would be more general and some calculations would be simpler if we drop the constraint $\max_j \{a_{ij}\}=1$ and $\min_j \{a_{ij}\}=0$, but only require that $0\le a_{ij} \le 1$, $\forall i, j \in [n]$. The only reason to have such a constraint is to follow the \emph{unit-range}  canonical representation of agents' valuation studied in literature.}.  Following this, a valuation profile (or instance) of agents' preferences can be represented by a matrix $\mathrm{A}=[a_{ij}]_{n \times n}$, where row vector $\mathbf{a_i}=(a_{i1},\ldots,a_{in})$ indicates the valuation of agent $i$'s preference.  An \emph{allocation} is  an assignment of items to agents. We denote an allocation by a matrix $\mathrm{X}=[x_{ij}]_{n \times n}$, where $x_{ij}$ indicates the probability of  agent $i$ receiving item $j$. Given any preferences of the agents as input, a \emph{mechanism} is a mapping from which to an allocation $\mathrm{X}$ as output.


We denote  the set of all possible instances by $\mathcal{A}$ and denote the set of all possible allocation by $\mathcal{X}$. Given a mechanism $\mathrm{M}$ and a valuation profile $\mathrm{A}\in \mathcal{A}$, as well as its allocation $\mathrm{X}(\mathrm{A})\in \mathcal{X}$, we denote the utility of agent $i$ by  $u_i (\mathrm{X}(\mathrm{A})) = \sum_{j} a_{ij}x_{ij}$ and denote the social welfare by $SW_{\mathrm{M}}(\mathrm{X}(\mathrm{A})) = \sum_{i} u_i(\mathrm{X}(\mathrm{A}))$. When the context is clear, we drop the allocation notation and simplely refer them by $u_i(\mathrm{A})$ and $SW_{\mathrm{M}}(\mathrm{A})$.  We note that there is another interpretation of our one-sided matching problem: items are divisible and $x_{ij}$ is the fraction of agent $i$ receiving item $j$. Since the number of agents is equal to the number of items, and every agent receives exactly one item, the allocation matrix $\mathrm{X}$ is a doubly stochastic matrix, i.e., $\sum_{j} x_{ij}=1, \forall i$ and $\sum_{i} x_{ij}=1, \forall j$. According to the Birkhoff - von Neumann theorem, every doubly stochastic matrix can be decomposed into a convex combination of some permutation matrices. Therefore, $u_i(\mathrm{A})$ and $SW_{\mathrm{M}}(\mathrm{A})$ can be interpreted as expected utilities and expected social welfare in the indivisible items setting, and can be interpreted as exact utilities and exact social welfare in the divisible items setting.

Agents are self-interested and look to maximize their utilities by giving a mendacious preference to the mechanism as part of the input. In approximate mechanism design, we restrict our interest to the class of \emph{truthful mechanisms}, i.e., the mechanisms in which agents cannot improve their utilities by misreporting. The canonical measure of efficiency of a truthful mechanism $\mathrm{M}$ is the \emph{worst-case approximation ratio},
\begin{equation*}
r_{\text{worst}}(\mathrm{M}) = \sup_{\mathrm{A} \in \mathcal{A}} \frac{SW_{\mathrm{OPT}}(\mathrm{A})}{SW_{\mathrm{M}}(\mathrm{A})},
\end{equation*}
where $SW_{\mathrm{OPT}}(\mathrm{A})= \max_{\mathrm{X} \in \mathcal{X}}\sum_{i=1}^{n}u_i(\mathrm{X})$ is the optimal social welfare which is equal to the value of the maximum weight matching between agents and items. This ratio compares social welfare of the optimal allocation against social welfare of a truthful mechanism $\mathrm{M}$. Note that  the ratio is no less than 1. 

Random priority mechanism fixes an ordering of the agents uniformly at random and then lets them pick their most preferred item from the set of available items based on this ordering. It is shown in \citep{FFZ:14} that random priority achieves the matching approximation ratio bound of $\Theta(\sqrt{n})$. This result implies that random priority is asymptotically the best truthful mechanism. 

\subsection{Smoothed and average-case approximation ratios}

Analogously to the definition of smoothed complexity of algorithms, we define the  \emph{smoothed approximation ratio}  as follows:
\begin{equation*}
r_{\text{smoothed}}(\mathrm{M}) = \max_{\mathrm{A}} \mathop{\mathbb{E}}_{g_{ij} \sim \mathrm{N}(0,1)} \left[ \frac{SW_{\mathrm{OPT}}(\mathrm{A}+\sigma||\mathrm{A}||\mathrm{G})}{SW_{\mathrm{M}}(\mathrm{A}+\sigma||\mathrm{A}||\mathrm{G})} \right] ,
\end{equation*}
where the parameter $\sigma$ is the size of the perturbation and $\mathrm{G}=[g_{ij}]_{n \times n}$. That is to say, $\sigma||\mathrm{A}||\mathrm{G}$ is a matrix of independent Gaussian variables of mean 0 and standard deviation $\sigma||\mathrm{A}||$. We multiply by $||\mathrm{A}||$ to relate the magnitude of the perturbation to the magnitude of the input $\mathrm{A}$. 
We say that a mechanism has a \emph{constant smoothed approximation ratio} if its smoothed approximation ratio is polynomial only in $1/ \sigma$, when the size of the input $n$ approaches infinity \footnote{We will show that the smoothed approximation ratio of random priority is polynomial only in $1/\sigma$ but not in $n$. This is in contrast to its worst-case approximation ratio $\Theta({\sqrt{n}})$. For this reason our result asymptotically differentiates the smoothed approximation ratio from the worst-case ratio.  It is a scenery analogously to the analysis of algorithms, where in some problems the smoothed complexity is polynomial in $n$ and $1/\sigma$ but the worst-case complexity is exponential in $n$. In both studies, the point of smoothed analysis is to show that although the mechanism (algorithm) may perform poorly in the worst case, it performs well under slight random perturbations of worst-case inputs.}.
Following the natural of  the unit-range representation, we consider the \emph{property-preserving perturbation} by restricting a natural perturbation model to preserve certain properties of the input. Specifically, the perturbation preserves agents' 0 and 1 valuation and all valuations stay in the interval $[0,1]$ \footnote{We note that the choice of Gaussian perturbation is standard in classical smoothed analysis of algorithms, and restriction of property-preserving perturbations is necessary and meaningful even in average-case analysis of algorithms. }.



Similarly, we define the \emph{average-case approximation ratio} of mechanisms as follows:
\begin{equation*}
r_{\text{average}}(\mathrm{M}) = \mathop{\mathop{\mathbb{E}}}_{a_{ij}\sim \mathrm{U}}\left[ \frac{SW_{\mathrm{OPT}}(\mathrm{A})}{SW_{\mathrm{M}}(\mathrm{A})} \right] ,
\end{equation*}
where the elements $a_{ij}$ of input $\mathrm{A}$ is chosen from a distribution $\mathrm{U}$. In this paper, we consider the case that agent's values $a_{ij}$ are independent variables and follow the uniform distribution $\mathrm{U}(0,1)$. For any agent $i$, since $a_{ij}, j=1,\ldots,n$, are independent and identically distributed $\mathrm{U}(0, 1)$ random variables,
the sum $\sum_{j=1}^{n}a_{ij}$ follows the {\em Irwin-Hall distribution}. So,
\begin{equation*}
\Pr\left[\sum\limits_{j=1}^{n}a_{ij}\le x\right] = \frac{1}{n!}\sum_{k=0}^{\lfloor x-1\rfloor}(-1)^{k}\binom{n}{k}(x-1-k)^{n} ,
\end{equation*}
where $\lfloor \cdot \rfloor$ is the floor function. 

In the following, we contrast this approach to the established literature on Bayesian mechanism design.

\subsection{Bayesian mechanism design approach}

In Bayesian mechanism design, there is also a prior distribution from which the agent types come from, but the objective is to characterize the maximum ratio (for some given distribution of the agent types) of the expected social welfare of a truthful mechanism over the expected social welfare of the optimal mechanism, i.e., the ratio of expectations. That is, the objective is to characterize the ratio $r$ in the following formula,
\begin{equation*}
\mathop{\mathop{\mathbb{E}}} \left[ SW_{\mathrm{OPT}}(\mathrm{A}) \right] \le r \cdot \mathop{\mathop{\mathbb{E}}} \left[ {SW_{\mathrm{M}}(\mathrm{A})} \right] .
\end{equation*}

Thus, the fundamental difference to our approach is that this measurement does not directly compare the performance of the two mechanisms on specific inputs. We discuss this in more detail in the following.

\subsection{Comparison of the two approaches}

We note that the worst-case approximation ratio compares the performance of the two mechanisms on a \emph{case-by-case basis}. In addition, the smoothed analysis of algorithms is defined as the performance of the algorithm on the worst-case input when it is subject to a slight random perturbation. Therefore, it measures the averaged performance of the algorithm in a small neighbourhood area of \emph{an individual input}. In essence, both metrics consider a specific input (or some small noise around a specific input). Indeed, our approach is informed by these two metrics.

Average-case approximation ratio and the Bayesian approach each has their strength in measuring how good a truthful mechanism approximates the optimal mechanism. The random priority mechanism for the one-sided matching problem is used for example, in the supplementary round of school student assignment in the New York City once in a while. In this case, we are interested in how likely a truthful mechanism performs well on a particular instance. Therefore, the average-case approximation ratio, which is the expectation of ratio, fits in this need.  On the other hand, online auction mechanisms, as the revenue source for Internet businesses, are used one a daily base. In that case, we are interested in how well a truthful mechanism performs comparing to the optimal mechanism when instances are sampled throughout the entire input space. Therefore, the Bayesian approach, which is the ratio of expectations, is more suitable.

We note that our definitions, the expectation of ratio, introduce more technical challenges. As we are interested in characterizing the expectation of a non-linear function (ratio of two variables), we can no longer handle expectations of two different variables separately. 
Let us take for example when agents' valuations are drawn independently and identically from the uniform distribution $\mathrm{U}(0,1)$. In the Bayesian approach, it is trivial to see that the expected social welfare of the completely random mechanism (by ignoring agents' valuations and allocate items totally random) is $n/2$, and the expected social welfare of the optimal mechanism is less than $n$. So the ratio of two expectations is less than 2. However, for the case-by-case comparison in the average-case approximation ratio, it is not directly clear what the ratio would be. In this paper, as a first step, we study the uniform distribution for the average-case ratio and the smoothed ratio. We suspect that different distributions would result in different ratios. In contrast, by the linearity of the ratio in two expectations, in the well-studied Bayesian auction design literature, it is common that positive results hold for a class of distributions, such as monotone hazard distributions \cite{DBLP:conf/stoc/HartlineL10}. 



\section{Smoothed Analysis}\label{sec:smoothed}

In this section we show that random priority has a constant smoothed approximation ratio. 

Firstly, let us understand the structure of the profile space $\mathcal{A}$ in the unit-range setting. For each agent $i$ with valuations $\mathbf a_i=(a_{i1},\ldots,a_{in})$, it has utility 0 on its least preferred item and has utility 1 on its most preferred item.  So there remains $n-2$ elements which are random variables following the uniform distribution $\mathrm{U}(0,1)$. 
Note that there are $n(n-1)$ possible choices to select $n-2$ out of $n$ elements to be random variables, while the other two elements being either 0 or 1. So there are $(n(n-1))^n$ possible configuration of such random variables. Therefore, we can decompose the sample space $\mathcal{A}$ into  sets $\mathrm{S_k}, k=1,\ldots, (n(n-1))^n$, where each set $\mathrm{S_k}$ contains the instances $\mathrm{A}$ that the specific two out of $n$ values of $a_{ij}, j\in [n]$ are fixed for each agent $i$. 

Given any valuation profile $\mathrm{A}$, its 0 and 1 entries are fixed so it must belong to exactly one of the set $\mathrm{S}_k$. Let $\mathrm{S}(\mathrm{A})\in\{\mathrm{S}_{k}\}_{k\in [(n(n-1))^{n}]}$ denote the set where $\mathrm{A}$ belongs to. Let $\mathrm{H}  (\mathrm{A}) = \{ (i,j) : a_{ij}=0 \ \text{or} \ a_{ij}=1 \}$,  clearly $\mathrm{H}  (\mathrm{A})$ has $2n$ elements. Given any set $\mathrm{S_k}$ and any two instances $\mathrm{A}_1$ and $\mathrm{A}_2$ chosen from  $\mathrm{S_k}$, it is easy to see that $\mathrm{H}  (\mathrm{A}_1) = \mathrm{H}  (\mathrm{A}_2)$. 

For convenience, denote $\Gamma=(\gamma_{ij})_{n\times n}=\sigma||\mathrm{A}||\mathrm{G}$. The probability density function of $\Gamma$ is
\begin{equation*}
f(\Gamma) =  K \cdot e^{-  \frac{|\Gamma|^2}{2{\sigma}^2 ||\mathrm{A}||^2}}
= K \cdot e^{- \sum\limits_{i=1}^{n}\sum\limits_{j=1}^n \frac{\gamma_{ij}^2}{2{\sigma}^2 ||\mathrm{A}||^2}} ,
\end{equation*}
where $K$ is a constant, $|\Gamma|^2 = \sum\limits_{i=1}^{n}\sum\limits_{j=1}^n \gamma_{ij}^2$. 
Therefore, our problem of characterizing the smoothed approximation ratio of random priority is reduced to computing the upper bound of
\begin{equation*}
\max_{\mathrm{A}\in \mathcal{A}}  \int_{\mathrm{A}+\Gamma \in \mathrm{S}(\mathrm{A})} \frac{1}{K'} \cdot e^{- \frac{|\Gamma|^2}{2{\sigma}^2 ||\mathrm{A}||^2}} \cdot \frac{SW_{OPT}(\mathrm{A}+\Gamma)}{SW_{RP}(\mathrm{A}+\Gamma)} d\Gamma ,
\end{equation*}
where $K' = \int_{\mathrm{A}+\Gamma \in \mathrm{S}(\mathrm{A})} e^{- \frac{|\Gamma|^2}{2{\sigma}^2 ||\mathrm{A}||^2}} d\Gamma$. As we clarified in the Preliminaries, we consider the property-preserving perturbation that preserves agents' 0 and 1 valuation and all valuations stay in the interval $[0,1]$. Therefore, our integral is taken over the space $\mathrm{A}+\Gamma \in \mathrm{S}(\mathrm{A})$. We would need the following auxiliary lemma. 

\noindent
\begin{lemma}\label{lem:matrixnorm}
For any valuation profile $\mathrm{A} \in \mathcal{A}$ and standard vector norm such as Euclidean norm, $p$-norm and maximum norm, we have $||\mathrm{A}|| \ge 1$.
\end{lemma}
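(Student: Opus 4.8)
The plan is to use the only structural fact that the unit-range representation forces on the matrix: for every agent $i$ the defining constraint $\max_j a_{ij}=1$ guarantees an item $j$ with $a_{ij}=1$, so $\mathrm{A}$, viewed as a vector of its $n^2$ entries, has at least one coordinate equal to $1$ (in fact $n$ of them, one per row). Everything then reduces to the elementary observation that each of the listed norms dominates the coordinatewise maximum, whose value is at least $1$.

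Concretely, I would first fix an arbitrary $\mathrm{A}\in\mathcal{A}$ and choose some pair $(i,j^\ast)$ with $a_{ij^\ast}=1$, which exists because $\mathrm{A}$ lies in the unit-range profile space. Identifying $\mathrm{A}$ with the vector $(a_{ij})_{(i,j)\in[n]\times[n]}$, I would then invoke the standard inequalities: for the maximum norm, $\|\mathrm{A}\|_\infty=\max_{i,j}|a_{ij}|\ge |a_{ij^\ast}|=1$; for any finite $p$-norm, $\|\mathrm{A}\|_p=\bigl(\sum_{i,j}|a_{ij}|^p\bigr)^{1/p}\ge |a_{ij^\ast}|=1$ since one term of the sum already equals $1$; and the Euclidean (Frobenius) norm is the case $p=2$. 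Hence $\|\mathrm{A}\|\ge 1$ in every case, and for the maximum norm one in fact gets equality since all entries lie in $[0,1]$.

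There is essentially no obstacle: the only point requiring care is to state explicitly that ``standard vector norm'' here means we treat the $n\times n$ matrix as an $n^2$-dimensional vector, rather than using an operator norm. Once that convention is fixed, the proof is the one-line bound above, and it is precisely what is needed downstream, since it shows the perturbation's standard deviation $\sigma\|\mathrm{A}\|$ is bounded below by $\sigma$, and the range $\|\mathrm{A}\|\le \sqrt{2n}$ (each row contributing at most two unit entries plus $n-2$ values in $[0,1]$) can be recorded alongside if an upper bound is wanted.
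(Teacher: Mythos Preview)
Your proof is correct and essentially identical to the paper's: both pick an entry equal to $1$ guaranteed by the unit-range condition and observe that each of the listed norms is at least the absolute value of any single coordinate. One minor quibble with your closing aside (not the proof itself): the upper bound $\|\mathrm{A}\|_2\le\sqrt{2n}$ is incorrect, since each row contributes up to $1+(n-2)\cdot 1=n-1$ to the sum of squares, giving only $\|\mathrm{A}\|_2\le\sqrt{n(n-1)}$.
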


\begin{proof}
In the unit-range setting, we have $a_{ij}\ge 0$ and $\max_j\{a_{ij}\}=1$. Without loss of generality, suppose $a_{st}=1$ for some $s\in [n]$ and $t\in [n]$, then $||\mathbf{A}||_{2}=\sqrt{\sum_{i=1,j=1}^{ n}|a_{ij}|^{2}}\ge\sqrt{a_{st}^{2}}=1$, $||\mathbf{A}||_{p}=\left( \sum_{i=1,j=1}^{n}|a_{ij}|^{p} \right)^{\frac{1}{p}}\ge ( |a_{st}|^{p})^{\frac{1}{p}}=1$, $||\mathbf{A}||_{\infty}=\max_{i\in [n],j\in [n]}|a_{ij}|\ge |a_{st}|=1.$
\end{proof}

Our main result of this section is the following.

\begin{theorem}\label{thm:smoothed}
Random priority has a constant smoothed approximation ratio. That is to say, its smoothed approximation ratio is polynomial in $1/\sigma$, when the input size $n$ approaches infinity.
\end{theorem}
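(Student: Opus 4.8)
The plan is to compare $SW_{OPT}$ and $SW_{RP}$ on the perturbed profile $\mathrm{A}':=\mathrm{A}+\Gamma$ via two crude but robust bounds: the trivial ceiling $SW_{OPT}(\mathrm{A}')\le n$ (every agent contributes at most its maximum value, and $1$ is preserved by the property-preserving perturbation), and a lower bound $SW_{RP}(\mathrm{A}')=\Omega(\sigma n)$ that I will show holds with overwhelming probability. On the complementary ``bad'' event one can always fall back on the worst-case guarantee $SW_{OPT}(\mathrm{A}')/SW_{RP}(\mathrm{A}')\le r_{\text{worst}}(\mathrm{RP})=O(\sqrt n)$ of \citet{FFZ:14}, so a bad-event probability of $o(1/\sqrt n)$ suffices for the whole expectation to be dominated by the good event and to equal $O(1/\sigma)$, which is polynomial in $1/\sigma$ and independent of $n$.

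The engine is an anti-concentration statement for one perturbed entry. Fix $\mathrm{A}$; by Lemma~\ref{lem:matrixnorm} the perturbation scale $\tau:=\sigma\|\mathrm{A}\|$ satisfies $\tau\ge\sigma$. Under the property-preserving model the $2n$ entries of $\mathrm{A}$ equal to $0$ or $1$ are untouched, and each of the remaining $n^2-2n$ free entries (the $n-2$ per row) becomes $a'_{ij}=a_{ij}+N(0,\tau^2)$ conditioned to lie in $[0,1]$; since the conditioning is coordinatewise and the Gaussians are independent, these conditional laws form a product measure. I claim there is $\rho(\sigma)=\Theta(\sigma)$ with $\Pr[a'_{ij}\ge\rho(\sigma)]\ge\tfrac12$ for every free entry, uniformly in $a_{ij}\in[0,1]$ and in $\tau\ge\sigma$: indeed $\Pr[a'_{ij}\le\rho]$ is the ratio of the mass a Gaussian centered at $a_{ij}$ places on a length-$\rho$ interval to the mass it places on a length-$1$ interval straddling $0$; the numerator is at most (length)$\times$(peak density) $=\rho/(\tau\sqrt{2\pi})$, the denominator is at least $\Theta(1/\max\{1,\tau\})$ by unimodality, and using $\tau\ge\sigma$ this ratio is $\le\tfrac12$ once $\rho=c\sigma$ for a small universal constant $c$. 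This is the step I expect to require the most care — one must keep the estimate uniform over where $a_{ij}$ sits inside $[0,1]$ and over the whole range $\tau\in[\sigma,\sigma\sqrt{n^2-n}]$ — and it is exactly here that Lemma~\ref{lem:matrixnorm} is indispensable: without $\|\mathrm{A}\|\ge1$ the perturbation could be arbitrarily small in absolute terms and no such $\rho(\sigma)$ would exist.

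Given the per-entry bound, let $E$ be the event that every agent has at least $n/5$ items valued (in $\mathrm{A}'$) at least $\rho(\sigma)$. For a fixed agent this count stochastically dominates a $\mathrm{Bin}(n-2,\tfrac12)$ variable, so a Chernoff bound gives failure probability $e^{-\Omega(n)}$ for that agent, and a union bound over the $n$ agents gives $\Pr[E^{c}]\le n e^{-\Omega(n)}=o(1/\sqrt n)$. On $E$ the lower bound on $SW_{RP}$ is then deterministic and holds for every realization of the random priority ordering: when the agent in position $t\le\lfloor n/5\rfloor$ picks, only $t-1<n/5$ items are gone, fewer than its stock of $\ge n/5$ good items, so at least one good item remains and that agent secures value $\ge\rho(\sigma)$; hence $SW_{RP}(\mathrm{A}')\ge\lfloor n/5\rfloor\,\rho(\sigma)=\Omega(\sigma n)$.

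Assembling the pieces, for every $\mathrm{A}$,
$$\mathbb{E}_{\Gamma}\!\left[\frac{SW_{OPT}(\mathrm{A}')}{SW_{RP}(\mathrm{A}')}\right]\le \Pr[E]\cdot\frac{n}{\Omega(\sigma n)}+\Pr[E^{c}]\cdot r_{\text{worst}}(\mathrm{RP})=O(1/\sigma)+o(1),$$
and taking the maximum over $\mathrm{A}$ yields $r_{\text{smoothed}}(\mathrm{RP})=O(1/\sigma)$, i.e.\ polynomial (indeed linear) in $1/\sigma$ and bounded independently of $n$ as $n\to\infty$. One can even dispense with the appeal to \citet{FFZ:14} and bound the bad-event contribution by $n\cdot\Pr[E^{c}]$ using only $SW_{RP}\ge1$ (the first agent always gets its favorite item), which still vanishes since $\Pr[E^{c}]$ is exponentially small; the worst-case bound is used only to make the required tail estimate as weak as possible.
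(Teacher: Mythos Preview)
Your argument is correct and genuinely different from the paper's.

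The paper establishes the bound by a volume argument: it fixes a parameter $c\in(0,1)$, shows via the Irwin--Hall distribution (using only the crude estimate $SW_{RP}(\mathrm{B})\ge\tfrac{1}{n}\sum_{i,j}b_{ij}$) that the set $\{\mathrm{B}\in S(\mathrm{A}):SW_{RP}(\mathrm{B})\le n^{c}\}$ has exponentially small Lebesgue volume inside $S(\mathrm{A})$, separately lower-bounds the Gaussian normalizing mass $K'$ in the two regimes $\sigma\|\mathrm{A}\|\le 1$ and $\sigma\|\mathrm{A}\|>1$, and then tunes $c$ so that the ``bad'' contribution $\mathrm{Vol}\cdot O(\sqrt{n})/K'$ dies while the ``good'' contribution is $n^{1-c}=O(1/\sigma)$ (resp.\ $O(1)$). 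Your route avoids both the Irwin--Hall calculus and the $K'$ estimates entirely: you work directly with the product structure of the truncated-Gaussian law, prove the single-entry anti-concentration $\Pr[a'_{ij}\ge c\sigma]\ge\tfrac12$, lift it to a per-agent count by Chernoff, and then lower-bound $SW_{RP}$ \emph{deterministically} on the good event via the pigeonhole ``first $n/5$ agents still see a good item'' argument. This sidesteps the case split on $\sigma\|\mathrm{A}\|$ and the delicate parameter balancing, and even makes the appeal to the worst-case $O(\sqrt{n})$ optional. What the paper's approach buys is reuse of the same $SW_{RP}\ge\tfrac{1}{n}\sum a_{ij}$ inequality that drives the average-case section, so the two theorems share machinery; what your approach buys is a shorter, more transparent proof with an explicitly linear dependence on $1/\sigma$.

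Two small points worth tightening. First, your anti-concentration claim is stated for $\rho=\Theta(\sigma)$, which is the right scaling only when $\sigma\le O(1)$; for large $\sigma$ the truncated Gaussian is nearly uniform on $[0,1]$ and one should take $\rho$ a fixed constant, giving ratio $O(1)$ in that regime (this mirrors the paper's Case~2 and keeps the final bound $O(\max\{1,1/\sigma\})$). Second, the justification ``by unimodality'' for the denominator lower bound $\Theta(1/\max\{1,\tau\})$ is correct but terse: the minimum of $\Phi((1-a)/\tau)+\Phi(a/\tau)-1$ over $a\in[0,1]$ is attained at the endpoints and equals $\Phi(1/\tau)-\tfrac12$, which is $\ge\Phi(1)-\tfrac12$ for $\tau\le1$ and $\sim 1/(\tau\sqrt{2\pi})$ for $\tau\ge1$. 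Spelling this out would make the uniformity over $a_{ij}$ and $\tau$ fully explicit.
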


Obviously, when $\frac{1}{\sigma} \ge \sqrt{n}$, since the smoothed approximation ratio is upper bounded by worst-case approximation ratio $O({\sqrt{n}})$, there $\exists M>0$, such that
\begin{equation*}
r_{\text{smoothed}} \le r_{\text{worst}} \le O({\sqrt{n}}) \le M \cdot \frac{1}{\sigma}
\end{equation*}

In the following we focus on the case when $\frac{1}{\sigma} < \sqrt{n}$. We further divide our analysis into two cases, depending on the size of the perturbation $\sigma$ and the magnitude of $\mathrm{A}$.


\indent

{\bf Case 1:} $\sigma ||\mathrm{A}|| \le 1$.

Firstly, the following lemma lower bounds the constant $K'$.
\begin{lemma}\label{lem:smoothed1}
When $\sigma ||\mathrm{A}|| \le 1$, for any instance $\mathrm{A}$ and Gaussian perturbation $\Gamma$, we have
\begin{equation*}
K' = \int_{\mathrm{A}+\Gamma \in \mathrm{S}(\mathrm{A})} e^{- \frac{|\Gamma|^2}{2{\sigma}^2 ||\mathrm{A}||^2}} d\Gamma
\ge  \left( e^{-\frac{1}{2}} \cdot \sigma ||\mathrm{A}|| \right)^{n(n-2)}
\end{equation*}
\end{lemma}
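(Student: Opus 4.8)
The plan is to turn the integral defining $K'$ into a product of $n(n-2)$ one-dimensional Gaussian integrals, one per free coordinate, and then to bound each factor below by $e^{-1/2}\sigma\|\mathrm{A}\|$.

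First I would make the domain explicit. The condition $\mathrm{A}+\Gamma\in\mathrm{S}(\mathrm{A})$ keeps the $2n$ entries indexed by $\mathrm{H}(\mathrm{A})$ equal to their values $0$ or $1$, which forces $\gamma_{ij}=0$ for every $(i,j)\in\mathrm{H}(\mathrm{A})$; the effective integration is therefore over the remaining $n(n-2)$ "free" coordinates. For $(i,j)\notin\mathrm{H}(\mathrm{A})$ we have $a_{ij}\in(0,1)$, and preserving $a_{ij}+\gamma_{ij}\in[0,1]$ means exactly $\gamma_{ij}\in[-a_{ij},\,1-a_{ij}]$, an interval of length $1$ that contains the origin in its interior. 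Since the Gaussian weight factorizes coordinatewise, $e^{-|\Gamma|^2/(2\sigma^2\|\mathrm{A}\|^2)}=\prod_{i,j}e^{-\gamma_{ij}^2/(2\sigma^2\|\mathrm{A}\|^2)}$, and the domain is the product box $\prod_{(i,j)\notin\mathrm{H}(\mathrm{A})}[-a_{ij},1-a_{ij}]$, Fubini gives
$$K'=\prod_{(i,j)\notin\mathrm{H}(\mathrm{A})}\ \int_{-a_{ij}}^{\,1-a_{ij}} e^{-t^2/(2\sigma^2\|\mathrm{A}\|^2)}\,dt .$$

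It then suffices to show each factor is at least $e^{-1/2}\sigma\|\mathrm{A}\|$. Write $s=\sigma\|\mathrm{A}\|$ and $h(x)=\int_0^x e^{-t^2/(2s^2)}\,dt$. By evenness of the integrand, the factor for a free coordinate with value $a\in(0,1)$ equals $h(a)+h(1-a)$; the map $a\mapsto h(a)+h(1-a)$ has derivative $e^{-a^2/(2s^2)}-e^{-(1-a)^2/(2s^2)}$, so it increases on $[0,1/2]$ and decreases on $[1/2,1]$ and is thus minimized at the endpoints, where it equals $h(1)=\int_0^1 e^{-t^2/(2s^2)}\,dt$. Substituting $t=su$ turns this into $s\int_0^{1/s}e^{-u^2/2}\,du$, and since Case~1 assumes $s\le1$ we have $1/s\ge1$, hence $\int_0^{1/s}e^{-u^2/2}\,du\ge\int_0^1 e^{-u^2/2}\,du\ge e^{-1/2}$, the last inequality because $u^2/2\le1/2$ on $[0,1]$. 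So every factor is at least $e^{-1/2}s$, and taking the product over the $n(n-2)$ free coordinates gives $K'\ge(e^{-1/2}\sigma\|\mathrm{A}\|)^{n(n-2)}$.

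The only mildly delicate step is the first one: checking that the measure-zero constraints $\gamma_{ij}=0$ on $\mathrm{H}(\mathrm{A})$ legitimately reduce the integral to the $n(n-2)$ free coordinates and that what remains is genuinely a product of intervals (this is exactly why the exponent is $n(n-2)$, matching $n-2$ free entries per agent). Once the factorization is in place, the rest is elementary: the monotonicity argument shows a Gaussian integral over a unit-length window straddling $0$ is smallest when the window is $[0,1]$, and the rescaling by $s\le1$ finishes it.
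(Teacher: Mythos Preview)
Your proof is correct and follows essentially the same approach as the paper: factorize $K'$ into $n(n-2)$ one-dimensional integrals over $[-a_{ij},1-a_{ij}]$, show each factor is at least $\int_0^1 e^{-t^2/(2s^2)}\,dt$, and bound that below by $e^{-1/2}s$ using $s\le 1$. The only cosmetic differences are that the paper gets the bound $\int_{-a}^{1-a}\ge\int_0^1$ by the translation inequality $\int_0^{a}e^{-t^2/(2s^2)}\,dt\ge\int_{1-a}^{1}e^{-t^2/(2s^2)}\,dt$ rather than your derivative argument, and it bounds $\int_0^1\ge\int_0^s\ge e^{-1/2}s$ directly without the substitution $t=su$; both routes are equivalent.
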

\begin{proof}
\begin{align*}
K' &= \int_{\mathrm{A}+\Gamma \in \mathrm{S}(\mathrm{A})} e^{- \frac{|\Gamma|^2}{2{\sigma}^2 ||\mathrm{A}||^2}} d\Gamma
= \prod_{\substack{i=1 \\ j=1}}^n \int_{\mathrm{A}+\Gamma \in \mathrm{S}(\mathrm{A})} e^{- \frac{|\gamma_{ij}|^2}{2{\sigma}^2 ||\mathrm{A}||^2}} d\gamma_{ij}  \\
&= \prod_{(i,j)\in \mathrm{H}(\mathrm{A}) }  e^{- \frac{0}{2{\sigma}^2 ||\mathrm{A}||^2}} \prod_{(i,j)\not\in \mathrm{H}(\mathrm{A}) }\int_{a_{ij}+\gamma_{ij} \in [0,1]} e^{- \frac{|\gamma_{ij}|^2}{2{\sigma}^2 ||\mathrm{A}||^2}} d\gamma_{ij}  \\
&= 1 \cdot  \prod_{(i,j)\not\in \mathrm{H}(\mathrm{A}) }\int_{-a_{ij}}^{1-a_{ij}} e^{- \frac{|\gamma_{ij}|^2}{2{\sigma}^2 ||\mathrm{A}||^2}} d\gamma_{ij}    
= \prod_{(i,j)\not\in \mathrm{H}(\mathrm{A}) }\left( \int_{0}^{1-a_{ij}} e^{- \frac{|\gamma_{ij}|^2}{2{\sigma}^2 ||\mathrm{A}||^2}} d\gamma_{ij} + \int_{0}^{a_{ij}} e^{- \frac{|\gamma_{ij}|^2}{2{\sigma}^2 ||\mathrm{A}||^2}} d\gamma_{ij} \right)   \\
&\ge  \prod_{(i,j)\not\in \mathrm{H}(\mathrm{A}) }\left( \int_{0}^{1-a_{ij}} e^{- \frac{|\gamma_{ij}|^2}{2{\sigma}^2 ||\mathrm{A}||^2}} d\gamma_{ij} + \int_{1-a_{ij}}^{1} e^{- \frac{|\gamma_{ij}|^2}{2{\sigma}^2 ||\mathrm{A}||^2}} d\gamma_{ij} \right)  
= \prod_{(i,j)\not\in \mathrm{H}(\mathrm{A}) }\int_{0}^{1} e^{- \frac{|\gamma_{ij}|^2}{2{\sigma}^2 ||\mathrm{A}||^2}} d\gamma_{ij}  \\
&= \left( \int_{0}^{1} e^{- \frac{|\gamma_{ij}|^2}{2{\sigma}^2 ||\mathrm{A}||^2}} d\gamma_{ij} \right)^{n(n-2)} 
\ge \left( \int_{0}^{\sigma ||\mathrm{A}||} e^{- \frac{|\gamma_{ij}|^2}{2{\sigma}^2 ||\mathrm{A}||^2}} d\gamma_{ij} \right)^{n(n-2)}  \\
&\ge \left( \int_{0}^{\sigma ||\mathrm{A}||} e^{- \frac{1}{2}} d\gamma_{ij} \right)^{n(n-2)}  
=  \left( e^{-\frac{1}{2}} \cdot \sigma ||\mathrm{A}|| \right)^{n(n-2)} 
\end{align*}
\end{proof}
\noindent
Secondly, given any instance $\mathrm{A}$ and its associated set $\mathrm{S}(\mathrm{A})$, we further partition the set $\mathrm{S}(\mathrm{A})$ into two subsets, according to the value of social welfare of random priority on its elements, namely $\mathrm{S}_{c}(\mathrm{A}) = \{ \mathrm{B} \in \mathrm{S}(\mathrm{A}): SW_{RP(\mathrm{B})} \le n^c \}$ and the residual $\mathrm{S}(\mathrm{A}) / \mathrm{S}_{c}(\mathrm{A}) = \{ \mathrm{B} \in \mathrm{S}(\mathrm{A}): SW_{RP(\mathrm{B})} > n^c \}$, where $0<c<1$ is a parameter that will facilitate us to prove our main theorem. Let $\mathrm{V(\mathrm{S}_{c}(\mathrm{A}))}$ and $\mathrm{V(\mathrm{S}(\mathrm{A}))}$ be the volume of $\mathrm{S}_{c}(\mathrm{A})$ and $\mathrm{S}(\mathrm{A})$, respectively. 
We upper bound the fraction of the elements of the set $\mathrm{S}(\mathrm{A})$ for which the social welfare of random priority on its elements is no more than $n^c$. A useful observation here is that, this objective is  equivalent to  computing the probability that the social welfare of random priority on any instance $\mathrm{A}$ is no more than $n^c$, when the agents' values $a_{ij}$ are independently and identically drawn from the uniform distribution $\mathrm{U}(0,1)$.

\begin{lemma}\label{lem:smoothed2}
Given any $n$, for every $0<c<1$,  we have
\begin{equation*}
\frac{\mathrm{V(\mathrm{S}_{c}(\mathrm{A}))}}{\mathrm{V(\mathrm{S}(\mathrm{A}))}} \le \frac{\mathrm{e}^{2n}}{\sqrt{2\pi}n} \cdot \left(\frac{2\mathrm{e}}{n^{1-c}}\right)^{n(n-2)}.
\end{equation*}
\end{lemma}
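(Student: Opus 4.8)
The plan is to exploit the observation recorded just before the lemma: unwinding the definition of volume, $V(\mathrm S_c(\mathrm A))/V(\mathrm S(\mathrm A))$ is exactly the probability that $SW_{RP}(\mathrm A)\le n^{c}$ when the $2n$ entries of $\mathrm A$ in $\mathrm H(\mathrm A)$ are held at their $0/1$ values and the remaining $n(n-2)$ entries are independent $\mathrm U(0,1)$ variables; so the quantity to bound is a lower-tail probability. The two ingredients are (i) a lower bound on $SW_{RP}(\mathrm A)$ that is linear in the entries of $\mathrm A$, and (ii) an elementary estimate for that lower-tail event.

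For (i), I will show $SW_{RP}(\mathrm A)\ge \tfrac1n\sum_{i,j}a_{ij}$ for every profile $\mathrm A$. Fix an arbitrary priority ordering $\pi$. When the agent in position $k$ picks, exactly $k-1$ items are gone, so it takes the best of the remaining $n-k+1$ items and hence gets utility at least the $(n-k+1)$-st smallest entry of its own row (the worst case being that the $n-k+1$ least valuable items are precisely the ones left). Summing these per-position guarantees over $k=1,\dots,n$ and averaging over the uniformly random ordering, each agent $i$ sits in each position with probability $1/n$; since $n-k+1$ ranges over $\{1,\dots,n\}$ as $k$ does, the contribution of row $i$ collapses to $\tfrac1n$ times the sum of all of its (sorted, hence all) entries, i.e.\ $\tfrac1n\sum_j a_{ij}$. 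Summing over $i$ gives the claim; in particular $SW_{RP}(\mathrm A)\ge\tfrac1n\sum_{(i,j)\notin\mathrm H(\mathrm A)}a_{ij}$, since the $1$'s and $0$'s contribute nonnegatively.

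For (ii), since each row of a profile in $\mathrm S(\mathrm A)$ has exactly one $1$ and one $0$, the event $SW_{RP}(\mathrm A)\le n^{c}$ is contained in $\bigl\{\sum_{(i,j)\notin\mathrm H(\mathrm A)}a_{ij}\le n^{1+c}\bigr\}$. The subset of $[0,1]^{n(n-2)}$ with coordinate-sum at most $t$ lies inside the simplex $\{x\ge 0:\sum x_i\le t\}$, of volume $t^{n(n-2)}/(n(n-2))!$, so
\begin{equation*}
\frac{V(\mathrm S_c(\mathrm A))}{V(\mathrm S(\mathrm A))}\ \le\ \frac{\bigl(n^{1+c}\bigr)^{n(n-2)}}{\bigl(n(n-2)\bigr)!}.
\end{equation*}
Now invoke Stirling in the form $m!\ge\sqrt{2\pi m}\,(m/e)^{m}$ with $m=n(n-2)$, together with $e\,n^{1+c}/(n(n-2))=e\,n^{c}/(n-2)\le 2e/n^{1-c}$ (valid for $n\ge 4$) and $1/\sqrt{2\pi n(n-2)}\le e^{2n}/(\sqrt{2\pi}\,n)$; this produces the stated bound, and the finitely many small $n$ are checked by hand.

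The real content is ingredient (i): one has to see that random priority's per-agent guarantee within a round is precisely an order statistic of that agent's row whose rank is the reverse of the (uniformly distributed) position, so that averaging over the round recovers the full row mean with no extra loss — this is what turns a statement about a combinatorially messy greedy process into the clean linear bound $SW_{RP}(\mathrm A)\ge\tfrac1n\sum_{i,j}a_{ij}$. Ingredient (ii) is then routine; the only subtlety there is that one needs a \emph{lower} bound on the factorial, so Stirling must be used in the right direction, and the coarse prefactor $e^{2n}$ in the statement is far from tight (the estimate above leaves considerable slack).
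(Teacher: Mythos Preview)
Your proposal is correct and follows essentially the paper's approach: the same lower bound $SW_{RP}(\mathrm A)\ge\tfrac{1}{n}\sum_{i,j}a_{ij}$ (the paper phrases it as stochastic dominance of the allocation vector over $(1/n,\dots,1/n)$, which is equivalent to your per-position order-statistic argument), then a tail bound on the sum of the $n(n-2)$ free entries, then Stirling. Your direct use of the simplex volume $t^{m}/m!$ in place of the paper's Irwin--Hall CDF formula is a minor streamlining---the paper picks up an extra $2^{n(n-2)}$ from crudely bounding the alternating Irwin--Hall sum, whereas you recover the factor of $2$ later via $1/(n-2)\le 2/n$ for $n\ge 4$.
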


\begin{proof}
We note that random priority is a truthful mechanism, and it fixes an ordering of agents uniformly at random. Every agent $i$ has a probability of $1/n$ to be selected first to choose an item, a probability of $2/n$ to be selected first or second to choose an item, and so on. That is to say, for each agent $i$, if we sort the items in decreasing order according to agent $i$'s preference, then the allocation vector $\mathbf x_i=(x_{ij})$ obtained by random priority would stochastically dominates the vector $(\frac{1}{n},\ldots,\frac{1}{n})$. I.e., for every $k=1,\ldots,n$, it holds that $\sum_{j=1}^k x_{ij} \ge k/n$.  

Therefore, the utility of agent $i$ in any instance $\mathrm{A}$ is $u_{i}(\mathrm{A}) = \sum\nolimits_{j=1}^n a_{ij}x_{ij} \ge \frac{1}{n} \cdot \sum_{j=1}^{n}a_{ij}$. Hence, we have social welfare $\mathrm{SW}_{\mathrm{RP}}(\mathrm{A})=\sum_i u_i(\mathrm{A}) \ge \frac{1}{n} \sum_{i,j} a_{ij}$. So, $\mathrm{SW}_{\mathrm{RP}}(\mathrm{A}) \le n^{c}$ implies $\sum\nolimits_{i\in[n],j\in[n]} a_{ij}\le n^{1+c}$. Together with the observation noted above, we utilize the Irwin-Hall distribution to prove the lemma.
\begin{align*}
&\frac{\mathrm{V(\mathrm{S}_{c}(\mathrm{A}))}}{\mathrm{V(\mathrm{S}(\mathrm{A}))}} 
= \Pr\left[\mathrm{SW}_{\mathrm{RP}}(\mathrm{A}) \le n^{c}\right]
\le \Pr\left[ \sum\nolimits_{i\in[n],j\in[n]} a_{ij}\le n^{1+c}\right] 
= \Pr\left[ \sum\nolimits_{(i,j)\not\in \mathrm{H}  (\mathrm{A}) } a_{ij}\le n^{1+c} - n\right]  \\
&= \frac{1}{(n(n-2))!}\sum_{k=0}^{\lfloor n^{1+c}-n\rfloor}(-1)^{k}\binom{n(n-2)}{k}(n^{1+c}-n-k)^{n(n-2)} \\
&\le \frac{1}{(n(n-2))!}\sum_{k=0}^{n(n-2)}\binom{n(n-2)}{k}n^{(1+c)(n(n-2))}
 = \frac{1}{(n(n-2))!}\cdot (2n^{1+c})^{n(n-2)}  \\
&\le  \frac{(2n^{1+c})^{n(n-2)}}{\sqrt{2\pi(n(n-2))}\cdot \left(\frac{n(n-2)}{\mathrm{e}}\right)^{n(n-2)}}  
 =\frac{(2n^{1+c})^{n(n-2)}}{\sqrt{2\pi}n\cdot \left(\frac{n^2}{\mathrm{e}}\right)^{n(n-2)}}\cdot \frac{\sqrt{2\pi}n\cdot \left(\frac{n^2}{\mathrm{e}}\right)^{n(n-2)}}{\sqrt{2\pi(n(n-2))}\cdot \left(\frac{n(n-2)}{\mathrm{e}}\right)^{n(n-2)}}  \\
&= \frac{1}{\sqrt{2\pi}n}\cdot \left(\frac{2\mathrm{e}}{n^{1-c}}\right)^{n(n-2)}\cdot \left(1+\frac{2}{n-2}\right)^{n(n-2)+\frac{1}{2}}  \\
& \le \frac{1}{\sqrt{2\pi}n}\cdot \left(\frac{2\mathrm{e}}{n^{1-c}}\right)^{n(n-2)}\cdot \mathrm{e}^{2n}
 = \frac{\mathrm{e}^{2n}}{\sqrt{2\pi}n}\cdot \left(\frac{2\mathrm{e}}{n^{1-c}}\right)^{n(n-2)}
\end{align*}
\end{proof} 
\vspace{-3mm}
Combing these two lemmas we can prove our main theorem under this case. 

{\bf Proof of Case 1 of Theorem \ref{thm:smoothed}}:\\
Note that when $\mathrm{A}+\Gamma \not\in \mathrm{S}_{c}(\mathrm{A})$, we can upper bound $\frac{SW_{OPT(\mathrm{A}+\Gamma)}}{SW_{RP(\mathrm{A}+\Gamma)}} \le n^{1-c}$, and in all cases   $\frac{SW_{OPT(\mathrm{A}+\Gamma)}}{SW_{RP(\mathrm{A}+\Gamma)}} \le O{(\sqrt{n})}$ according to the worst-case approximation ratio result \cite{FFZ:14}.  In addition, by Lemma \ref{lem:smoothed2} we have,
\begin{equation*}
\mathrm{V(\mathrm{S}_{c}(\mathrm{A}))} \le \frac{\mathrm{e}^{2n}}{\sqrt{2\pi}n} \cdot \left(\frac{2\mathrm{e}}{n^{1-c}}\right)^{n(n-2)} \cdot \mathrm{V(\mathrm{S}(\mathrm{A}))} =  \frac{\mathrm{e}^{2n}}{\sqrt{2\pi}n} \cdot \left(\frac{2\mathrm{e}}{n^{1-c}}\right)^{n(n-2)} \cdot 1 
\end{equation*}
Therefore,
\begin{align*}
& \mathop{\mathbb{E}}_{g_{ij} \sim \mathrm{N}(0,1)} \left[ \frac{SW_{\mathrm{OPT}}(\mathrm{A}+\sigma||\mathrm{A}||\mathrm{G})}{SW_{\mathrm{RP}}(\mathrm{A}+\sigma||\mathrm{A}||\mathrm{G})} \right]  
=  \int_{\mathrm{A}+\Gamma \in \mathrm{S}(\mathrm{A})} \frac{e^{- \frac{|\Gamma|^2}{2{\sigma}^2 ||\mathrm{A}||^2}}}{\int_{\mathrm{A}+\Gamma \in \mathrm{S}(\mathrm{A})} e^{- \frac{|\Gamma|^2}{2{\sigma}^2 ||\mathrm{A}||^2}} d\Gamma} \cdot \frac{SW_{OPT(\mathrm{A}+\Gamma)}}{SW_{RP(\mathrm{A}+\Gamma)}} d\Gamma \\
&= \int_{\mathrm{A}+\Gamma \not\in \mathrm{S}_{c}(\mathrm{A})} \frac{e^{- \frac{|\Gamma|^2}{2{\sigma}^2 ||\mathrm{A}||^2}} \cdot \frac{SW_{OPT(\mathrm{A}+\Gamma)}}{SW_{RP(\mathrm{A}+\Gamma)}}}{\int_{\mathrm{A}+\Gamma \in \mathrm{S}(\mathrm{A})} e^{- \frac{|\Gamma|^2}{2{\sigma}^2 ||\mathrm{A}||^2}} d\Gamma}  d\Gamma
+ \int_{\mathrm{A}+\Gamma \in \mathrm{S}_{c}(\mathrm{A})} \frac{e^{- \frac{|\Gamma|^2}{2{\sigma}^2 ||\mathrm{A}||^2}} \cdot \frac{SW_{OPT(\mathrm{A}+\Gamma)}}{SW_{RP(\mathrm{A}+\Gamma)}}}{\int_{\mathrm{A}+\Gamma \in \mathrm{S}(\mathrm{A})} e^{- \frac{|\Gamma|^2}{2{\sigma}^2 ||\mathrm{A}||^2}} d\Gamma}  d\Gamma      \\
&\le \int_{\mathrm{A}+\Gamma \not\in \mathrm{S}_{c}(\mathrm{A})} \frac{e^{- \frac{|\Gamma|^2}{2{\sigma}^2 ||\mathrm{A}||^2}} \cdot n^{1-c}}{\int_{A+\Gamma \in \mathrm{S}(\mathrm{A})} e^{- \frac{|\Gamma|^2}{2{\sigma}^2 ||A||^2}} d\Gamma}  d\Gamma
+ \int_{\mathrm{A}+\Gamma \in \mathrm{S}_{c}(\mathrm{A})} \frac{e^{- \frac{|\Gamma|^2}{2{\sigma}^2 ||\mathrm{A}||^2}} \cdot O(\sqrt{n})}{\int_{\mathrm{A}+\Gamma \in \mathrm{S}(\mathrm{A})} e^{- \frac{|\Gamma|^2}{2{\sigma}^2 ||\mathrm{A}||^2}} d\Gamma}  d\Gamma\\
&\le n^{1-c} +  \frac{\mathrm{V(\mathrm{S}_{c}(\mathrm{A}))}\cdot 1 \cdot O(\sqrt{n})}{\int_{\mathrm{A}+\Gamma \in \mathrm{S}(\mathrm{A})} e^{- \frac{|\Gamma|^2}{2{\sigma}^2 ||A||^2}} d\Gamma} 
\le  n^{1-c} + \frac{ \frac{e^{2n}}{\sqrt{2\pi}n} \cdot \left( \frac{2e}{n^{1-c}} \right)^{(n(n-2))} \cdot O(\sqrt{n})}{ \left( e^{- \frac{1}{2}}\cdot \sigma ||\mathrm{A}|| \right)^{n(n-2)}}  \,\,\,\,\,\,\  \text{(by Lemma \ref{lem:smoothed1})} \\
&= n^{1-c} +  \frac{e^{2n}}{\sqrt{2\pi}n} \cdot O(\sqrt{n}) \cdot\left( \frac{2e^{\frac{3}{2}}}{n^{1-c}\sigma ||\mathrm{A}||} \right)^{(n(n-2))}
\end{align*}

Now let $1-c=\frac{2}{(n-2)\log n} + \frac{ \log (2e^{ \frac{3}{2} }) }{\log n} + \frac{\log( 1/\sigma ||\mathrm{A}|| )}{\log n}$. According to Lemma~\ref{lem:matrixnorm}, we have $||\mathrm{A}||\ge 1$, combining with the assumption $\frac{1}{\sigma} < \sqrt{n}$, we get that $\frac{\log( 1/\sigma ||\mathrm{A}|| )}{\log n} \le \frac{\log( 1/\sigma  )}{\log n} < \frac{\log( \sqrt{n} )}{\log n} = \frac{1}{2}$, which means $0<1-c<\frac{1}{2}$, as $n$ approaches infinity. Therefore the value of $1-c$ is feasible. Then,
\begin{align*}
\mathop{\mathbb{E}}_{g_{ij} \sim \mathrm{N}(0,1)} \left[ \frac{SW_{\mathrm{OPT}}(\mathrm{A}+\sigma||\mathrm{A}||\mathrm{G})}{SW_{\mathrm{RP}}(\mathrm{A}+\sigma||\mathrm{A}||\mathrm{G})} \right]
&\le 2e^{\frac{3}{2}} \cdot e^{\frac{2}{n-2}} \cdot \frac{1}{\sigma||\mathrm{A}||} + \frac{O(\sqrt{n})}{\sqrt{2\pi}n} 
 \le 2e^{\frac{3}{2}} \cdot e^{\frac{2}{n-2}} \cdot \frac{1}{\sigma} + \frac{O(\sqrt{n})}{\sqrt{2\pi}n} 
\end{align*}

So, when $\sigma||\mathrm{A}|| \le 1$ and $\frac{1}{\sigma}<\sqrt{n}$, we have $\mathop{\mathbb{E}}_{g_{ij} \sim \mathrm{N}(0,1)} \left[ \frac{SW_{\mathrm{OPT}}(\mathrm{A}+\sigma||\mathrm{A}||\mathrm{G})}{SW_{\mathrm{RP}}(\mathrm{A}+\sigma||\mathrm{A}||\mathrm{G})} \right]
 < 2e^{\frac{3}{2}}\cdot \frac{1}{\sigma}$ as $n$ approaches infinity.
 
 \indent

{\bf Case 2:} $\sigma ||\mathrm{A}|| > 1$.

We first establish the following lemma to lower bound the constant $K'$ in this case. 

\begin{lemma}\label{lem:smoothed3}
For any instance $\mathrm{A}$ and Gaussian perturbation $\Gamma$, we have
\begin{equation*}
K' = \int_{\mathrm{A}+\Gamma \in \mathrm{S}(\mathrm{A})} e^{- \frac{|\Gamma|^2}{2{\sigma}^2 ||\mathrm{A}||^2}} d\Gamma
\ge  e^{- \frac{n(n-2)}{2\sigma^2||\mathrm{A}||^2}}
\end{equation*}
\end{lemma}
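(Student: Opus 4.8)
The plan is to mimic the coordinate-wise factorization of $K'$ used in the proof of Lemma~\ref{lem:smoothed1}, and then apply a cruder pointwise lower bound on the Gaussian density that, unlike the one there, does not rely on $\sigma||\mathrm{A}|| \le 1$. Concretely, writing $\Gamma=(\gamma_{ij})$, the constraint $\mathrm{A}+\Gamma\in\mathrm{S}(\mathrm{A})$ forces $\gamma_{ij}=0$ for the $2n$ pairs $(i,j)\in\mathrm{H}(\mathrm{A})$ and forces $a_{ij}+\gamma_{ij}\in[0,1]$, i.e.\ $\gamma_{ij}\in[-a_{ij},\,1-a_{ij}]$, for the remaining $n(n-2)$ pairs. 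Since the Gaussian density factorizes over coordinates, this gives
\begin{equation*}
K' \;=\; \prod_{(i,j)\in\mathrm{H}(\mathrm{A})} 1 \;\cdot\; \prod_{(i,j)\notin\mathrm{H}(\mathrm{A})} \int_{-a_{ij}}^{1-a_{ij}} e^{-\frac{\gamma_{ij}^2}{2\sigma^2||\mathrm{A}||^2}}\,d\gamma_{ij}.
\end{equation*}

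Next I would bound each of the $n(n-2)$ nontrivial factors from below. Because of the unit-range normalization we have $0\le a_{ij}\le 1$, so the integration interval $[-a_{ij},\,1-a_{ij}]$ has length exactly $1$ and is contained in $[-1,1]$; hence $\gamma_{ij}^2\le 1$ throughout, and the integrand is at least $e^{-1/(2\sigma^2||\mathrm{A}||^2)}$ on the entire interval. Therefore each factor is at least $e^{-1/(2\sigma^2||\mathrm{A}||^2)}\cdot 1$, and taking the product over all $n(n-2)$ of them yields $K'\ge e^{-n(n-2)/(2\sigma^2||\mathrm{A}||^2)}$, which is exactly the claimed bound.

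There is essentially no obstacle here; the one point worth noting is why a separate lemma is needed at all. The estimate in Lemma~\ref{lem:smoothed1} ends with the factor $(e^{-1/2}\sigma||\mathrm{A}||)^{n(n-2)}$, which is obtained by shrinking the interval of integration to $[0,\sigma||\mathrm{A}||]$ and is only legitimate when $\sigma||\mathrm{A}||\le 1$; in the regime $\sigma||\mathrm{A}||>1$ that argument breaks down. The present estimate instead trades the ``$\sigma||\mathrm{A}||$ per coordinate'' factor for the clean quantity $e^{-1/(2\sigma^2||\mathrm{A}||^2)}$, which is close to $1$ precisely when $\sigma||\mathrm{A}||>1$, so that $K'$ is bounded away from $0$ at the (merely exponential in $n^2$) rate needed later. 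In fact the bound derived above holds for every $\mathrm{A}$ and every $\sigma$; it is stated under Case~2 only because that is where it is the informative one.
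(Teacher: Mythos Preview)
Your proof is correct and follows essentially the same approach as the paper: coordinate-wise factorization, the $(i,j)\in\mathrm{H}(\mathrm{A})$ coordinates contribute $1$, and each of the remaining $n(n-2)$ one-dimensional integrals is bounded below by $e^{-1/(2\sigma^2\|\mathrm{A}\|^2)}$. The only cosmetic difference is that the paper first splits $\int_{-a_{ij}}^{1-a_{ij}}=\int_0^{1-a_{ij}}+\int_0^{a_{ij}}$ and uses monotonicity to replace $\int_0^{a_{ij}}$ by $\int_{1-a_{ij}}^1$, thereby reducing to $\int_0^1$ before applying the pointwise bound; you instead observe directly that $[-a_{ij},1-a_{ij}]\subseteq[-1,1]$ so $\gamma_{ij}^2\le 1$ on the original interval, which is slightly shorter.
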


\begin{proof}
\begin{align}
K' &= \int_{\mathrm{A}+\Gamma \in \mathrm{S}(\mathrm{A})} e^{- \frac{|\Gamma|^2}{2{\sigma}^2 ||\mathrm{A}||^2}} d\Gamma
= \prod_{i=1,j=1}^n \int_{\mathrm{A}+\Gamma \in \mathrm{S}(\mathrm{A})} e^{- \frac{|\gamma_{ij}|^2}{2{\sigma}^2 ||\mathrm{A}||^2}} d\gamma_{ij}  \nonumber \\
&= \prod_{(i,j)\in \mathrm{H}(\mathrm{A}) }  e^{- \frac{0}{2{\sigma}^2 ||\mathrm{A}||^2}} \prod_{(i,j)\not\in \mathrm{H}(\mathrm{A}) }\int_{a_{ij}+\gamma_{ij} \in [0,1]} e^{- \frac{|\gamma_{ij}|^2}{2{\sigma}^2 ||\mathrm{A}||^2}} d\gamma_{ij}  \nonumber \\
&= 1 \cdot  \prod_{(i,j)\not\in \mathrm{H}(\mathrm{A}) }\int_{-a_{ij}}^{1-a_{ij}} e^{- \frac{|\gamma_{ij}|^2}{2{\sigma}^2 ||\mathrm{A}||^2}} d\gamma_{ij}  \nonumber \\
&= \prod_{(i,j)\not\in \mathrm{H}(\mathrm{A}) }\left( \int_{0}^{1-a_{ij}} e^{- \frac{|\gamma_{ij}|^2}{2{\sigma}^2 ||\mathrm{A}||^2}} d\gamma_{ij} + \int_{0}^{a_{ij}} e^{- \frac{|\gamma_{ij}|^2}{2{\sigma}^2 ||\mathrm{A}||^2}} d\gamma_{ij} \right) \nonumber \\
&\ge  \prod_{(i,j)\not\in \mathrm{H}(\mathrm{A}) }\left( \int_{0}^{1-a_{ij}} e^{- \frac{|\gamma_{ij}|^2}{2{\sigma}^2 ||\mathrm{A}||^2}} d\gamma_{ij} + \int_{1-a_{ij}}^{1} e^{- \frac{|\gamma_{ij}|^2}{2{\sigma}^2 ||\mathrm{A}||^2}} d\gamma_{ij} \right) \nonumber \\
&= \prod_{(i,j)\not\in \mathrm{H}(\mathrm{A}) }\int_{0}^{1} e^{- \frac{|\gamma_{ij}|^2}{2{\sigma}^2 ||\mathrm{A}||^2}} d\gamma_{ij}  \nonumber 
= \left( \int_{0}^{1} e^{- \frac{|\gamma_{ij}|^2}{2{\sigma}^2 ||\mathrm{A}||^2}} d\gamma_{ij} \right)^{n(n-2)} \nonumber \\
&\ge \left( \int_{0}^{1} e^{- \frac{1}{2{\sigma}^2 ||\mathrm{A}||^2}} d\gamma_{ij} \right)^{n(n-2)} \nonumber 
=  e^{- \frac{n(n-2)}{2\sigma^2||\mathrm{A}||^2}} \nonumber
\end{align}
\end{proof}



Now we are ready to prove the main theorem under this case. We need to use Lemma \ref{lem:smoothed3} and handle the parameter $c$ differently.

{\bf Proof of Case 2 of Theorem \ref{thm:smoothed}}:\\
\begin{align*}
&\mathop{\mathbb{E}}_{g_{ij} \sim \mathrm{N}(0,1)} \left[ \frac{SW_{\mathrm{OPT}}(\mathrm{A}+\sigma||\mathrm{A}||\mathrm{G})}{SW_{\mathrm{RP}}(\mathrm{A}+\sigma||\mathrm{A}||\mathrm{G})} \right] 
=  \int_{\mathrm{A}+\Gamma \in \mathrm{S}(\mathrm{A})} \frac{e^{- \frac{|\Gamma|^2}{2{\sigma}^2 ||\mathrm{A}||^2}}}{\int_{\mathrm{A}+\Gamma \in \mathrm{S}(\mathrm{A})} e^{- \frac{|\Gamma|^2}{2{\sigma}^2 ||\mathrm{A}||^2}} d\Gamma} \cdot \frac{SW_{OPT(\mathrm{A}+\Gamma)}}{SW_{RP(\mathrm{A}+\Gamma)}} d\Gamma \\
&= \int_{\mathrm{A}+\Gamma \not\in \mathrm{S}_{c}(\mathrm{A})} \frac{e^{- \frac{|\Gamma|^2}{2{\sigma}^2 ||\mathrm{A}||^2}} \cdot \frac{SW_{OPT(\mathrm{A}+\Gamma)}}{SW_{RP(\mathrm{A}+\Gamma)}}}{\int_{A+\Gamma \in \mathrm{S}(\mathrm{A})} e^{- \frac{|\Gamma|^2}{2{\sigma}^2 ||\mathrm{A}||^2}} d\Gamma}  d\Gamma
+ \int_{\mathrm{A}+\Gamma \in \mathrm{S}_{c}(\mathrm{A})} \frac{e^{- \frac{|\Gamma|^2}{2{\sigma}^2 ||\mathrm{A}||^2}} \cdot \frac{SW_{OPT(\mathrm{A}+\Gamma)}}{SW_{RP(\mathrm{A}+\Gamma)}} }{\int_{A+\Gamma \in \mathrm{S}(\mathrm{A})} e^{- \frac{|\Gamma|^2}{2{\sigma}^2 ||\mathrm{A}||^2}} d\Gamma} d\Gamma\\
&\le \int_{\mathrm{A}+\Gamma \not\in \mathrm{S}_{c}(\mathrm{A})} \frac{e^{- \frac{|\Gamma|^2}{2{\sigma}^2 ||\mathrm{A}||^2}} \cdot n^{1-c}}{\int_{\mathrm{A}+\Gamma \in \mathrm{S}(\mathrm{A})} e^{- \frac{|\Gamma|^2}{2{\sigma}^2 ||\mathrm{A}||^2}} d\Gamma}  d\Gamma
+ \int_{\mathrm{A}+\Gamma \in \mathrm{S}_{c}(\mathrm{A})} \frac{e^{- \frac{|\Gamma|^2}{2{\sigma}^2 ||\mathrm{A}||^2}} \cdot O(\sqrt{n})}{\int_{\mathrm{A}+\Gamma \in \mathrm{S}(\mathrm{A})} e^{- \frac{|\Gamma|^2}{2{\sigma}^2 ||\mathrm{A}||^2}} d\Gamma}  d\Gamma\\
&\le n^{1-c} +  \frac{\mathrm{V(\mathrm{S}_{c}(\mathrm{A}))}\cdot 1 \cdot O(\sqrt{n})}{\int_{\mathrm{A}+\Gamma \in \mathrm{S}(\mathrm{A})} e^{- \frac{|\Gamma|^2}{2{\sigma}^2 ||\mathrm{A}||^2}} d\Gamma} 
\le  n^{1-c} + \frac{ \frac{e^{2n}}{\sqrt{2\pi}n} \cdot \left( \frac{2e}{n^{1-c}} \right)^{(n(n-2))} \cdot O(\sqrt{n})}{ e^{- \frac{n(n-2)}{2\sigma^2||\mathrm{A}||^2}}} \,\,\,\,\,\,\  \text{(by Lemma \ref{lem:smoothed3})} \\
&= n^{1-c} +  \frac{e^{2n}}{\sqrt{2\pi}n} \cdot O(\sqrt{n}) \cdot\left( \frac{2e^{1 + \frac{1}{2\sigma^2||\mathrm{A}||^2}}}{n^{1-c}} \right)^{(n(n-2))}
\end{align*}
Let $1-c= \frac{2}{(n-2)\log n} + \frac{1}{2\sigma^2||\mathrm{A}||^2 \log n} + \frac{\log (2e)}{\log n }$, then,

\begin{align*}
\mathop{\mathbb{E}}_{g_{ij} \sim \mathrm{N}(0,1)} \left[ \frac{SW_{\mathrm{OPT}}(\mathrm{A}+\sigma||\mathrm{A}||\mathrm{G})}{SW_{\mathrm{RP}}(\mathrm{A}+\sigma||\mathrm{A}||\mathrm{G})} \right]
&\le 2e \cdot e^{\frac{2}{n-2}+ \frac{1}{2\sigma^2||\mathrm{A}||^2}} + \frac{O(\sqrt{n})}{\sqrt{2\pi}n}
 \le  2e^{\frac{3}{2}} \cdot e^{\frac{2}{n-2}}  + \frac{O(\sqrt{n})}{\sqrt{2\pi}n} 
 \end{align*}

So, when $\sigma||\mathrm{A}|| >1$ and $\frac{1}{\sigma}<\sqrt{n}$, we have
$\mathop{\mathbb{E}}_{g_{ij} \sim \mathrm{N}(0,1)} \left[ \frac{SW_{\mathrm{OPT}}(\mathrm{A}+\sigma||\mathrm{A}||\mathrm{G})}{SW_{\mathrm{RP}}(\mathrm{A}+\sigma||\mathrm{A}||\mathrm{G})} \right]
< 2e^{\frac{3}{2}}$ as $n$ approaches infinity.

Combining these two cases we complete our proof of Theorem \ref{thm:smoothed}.


\section{Average-case Analysis}\label{sec:average}

In this section we show the average-case approximation ratio of the random priority mechanism can be improved to $1+e$. 

Firstly, we upper bound the probability of the social welfare of random priority when it is smaller than a value $n^c$, for a certain parameter $c$.

\begin{lemma}\label{lem:average}
Given any $n$, for every $0<c<1$, such that $\frac{\mathrm{e}^{2}}{\sqrt{2\pi n}}\cdot \left(\frac{2\mathrm{e}}{n^{1-c}}\right)^{n-2} < 1$, we have
\begin{equation*}
\Pr[SW_{RP}(\mathrm{A}) \le n^{c}] \le \frac{\mathrm{e}^{2}}{\sqrt{2\pi}} \cdot \sqrt{n} \cdot \left(\frac{2\mathrm{e}}{n^{1-c}}\right)^{n-2}.
\end{equation*}
\end{lemma}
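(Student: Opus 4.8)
The plan is to follow the same template as Lemma~\ref{lem:smoothed2}, but to replace the global Irwin--Hall estimate on all $n(n-2)$ free coordinates by a union bound over the $n$ agents together with the order-$(n-2)$ Irwin--Hall estimate for a single agent; this is exactly what produces the exponent $n-2$ instead of $n(n-2)$.

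First I would reduce the event $\{SW_{RP}(\mathrm{A})\le n^c\}$ to a single-agent event. Recall from the proof of Lemma~\ref{lem:smoothed2} that the stochastic-dominance property of random priority gives $u_i(\mathrm{A})\ge \frac1n\sum_{j=1}^n a_{ij}$ for every agent $i$, hence $SW_{RP}(\mathrm{A})=\sum_i u_i(\mathrm{A})\ge \frac1n\sum_{i,j}a_{ij}$. For each $i$ write $\sum_{j=1}^n a_{ij}=1+S_i$, where $S_i:=\sum_{(i,j)\notin \mathrm{H}(\mathrm{A})}a_{ij}$ is the sum of the $n-2$ free coordinates of row $i$, so each $S_i$ is a sum of $n-2$ independent $\mathrm{U}(0,1)$ variables. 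Then $SW_{RP}(\mathrm{A})\ge 1+\frac1n\sum_i S_i$, so $SW_{RP}(\mathrm{A})\le n^c$ forces $\frac1n\sum_i S_i\le n^c-1$, and by averaging there is at least one agent $i$ with $S_i\le n^c-1<n^c$. Consequently
\begin{equation*}
\Pr[SW_{RP}(\mathrm{A})\le n^c]\ \le\ \Pr\big[\exists\, i:\ S_i< n^c\big]\ \le\ \sum_{i=1}^n \Pr[S_i< n^c]\ =\ n\cdot\Pr[S_1\le n^c],
\end{equation*}
where the last equality uses that the $S_i$ are identically distributed.

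Next I would estimate $\Pr[S_1\le n^c]$ via the Irwin--Hall cumulative distribution function of order $n-2$, exactly as in Lemma~\ref{lem:smoothed2}: bound the alternating sum $\frac{1}{(n-2)!}\sum_{k=0}^{\lfloor n^c\rfloor}(-1)^k\binom{n-2}{k}(n^c-k)^{n-2}$ by $\frac{1}{(n-2)!}\sum_{k=0}^{n-2}\binom{n-2}{k}(n^c)^{n-2}=\frac{(2n^c)^{n-2}}{(n-2)!}$, and then apply Stirling's bound $(n-2)!\ge\sqrt{2\pi(n-2)}\big(\tfrac{n-2}{e}\big)^{n-2}$ to get $\Pr[S_1\le n^c]\le \frac{1}{\sqrt{2\pi(n-2)}}\big(\frac{2en^c}{n-2}\big)^{n-2}$. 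Writing $\frac{2en^c}{n-2}=\frac{2e}{n^{1-c}}\big(1+\frac{2}{n-2}\big)$ and using the elementary estimate $\big(1+\frac{2}{n-2}\big)^{n-3/2}\le e^2$ (equivalently $\frac{1}{\sqrt{n-2}}\big(1+\frac{2}{n-2}\big)^{n-2}\le \frac{e^2}{\sqrt n}$), this becomes $\Pr[S_1\le n^c]\le \frac{e^2}{\sqrt{2\pi n}}\big(\frac{2e}{n^{1-c}}\big)^{n-2}$. Multiplying by the factor $n$ from the union bound yields the claimed bound $\frac{e^2}{\sqrt{2\pi}}\sqrt n\,\big(\frac{2e}{n^{1-c}}\big)^{n-2}$; the hypothesis $\frac{e^2}{\sqrt{2\pi n}}\big(\frac{2e}{n^{1-c}}\big)^{n-2}<1$ only serves to place us in the regime (roughly $n^c\lesssim(n-2)/(2e)$) where this estimate is non-vacuous.

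The only genuinely non-routine step is the first one. The naive per-agent inequality $u_i(\mathrm{A})\le SW_{RP}(\mathrm{A})\le n^c$ by itself gives only $\sum_j a_{ij}\le n\cdot n^c$, which is vacuous since a sum of $n-2$ variables in $[0,1]$ never exceeds $n-2$; the point is that the \emph{averaging} $\frac1n\sum_i S_i\le n^c$ pins down \emph{some} single agent's free-coordinate sum at the much smaller scale $\Theta(n^c)$, which is precisely what makes the order-$(n-2)$ Irwin--Hall tail bite. After that the argument is the same Stirling bookkeeping already carried out for Lemma~\ref{lem:smoothed2}, the only mild subtlety being the absorption of the $\big(1+\tfrac{2}{n-2}\big)^{n-2}$ and $1/\sqrt{n-2}$ factors into $e^2/\sqrt n$.
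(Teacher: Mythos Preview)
Your argument is correct and lands on exactly the same single-agent Irwin--Hall/Stirling estimate as the paper, multiplied by the factor~$n$. The one genuine difference is in how that factor~$n$ is produced. The paper works on the complement: it uses $\Pr[SW_{RP}>n^c]\ge \Pr\big[\bigcap_i\{\sum_j a_{ij}>n^c\}\big]=\prod_i\big(1-\Pr[\sum_j a_{ij}\le n^c]\big)$ (independence across rows) and then Bernoulli's inequality $(1-p)^n\ge 1-np$; this is where the hypothesis $p<1$ is actually invoked. You instead use the averaging/pigeonhole observation that $SW_{RP}\le n^c$ forces some $S_i\le n^c-1$, followed by a union bound. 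Both routes give $\Pr[SW_{RP}\le n^c]\le n\cdot p$ with the same~$p$; your version is arguably cleaner because it avoids the Bernoulli step and makes clear that the hypothesis is only needed for non-vacuousness rather than for logical validity.
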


\begin{proof}
For any agent $i$, since $a_{ij}, j=1,\ldots,n$, are independent and identically distributed $\mathrm{U}(0, 1)$ random variables,
the sum $\sum_{j=1}^{n}a_{ij}$ follows the {\em Irwin-Hall distribution}. In addition, in the unit-range setting, there exist $j_1$ and $j_2$ such that $a_{ij_1}=1, a_{ij_2}=0$, so $\sum\nolimits_{j=1}^{n}a_{ij}\le x$ is equivalent to $\sum\nolimits_{j \neq j_1,j_2} a_{ij}\le x-1$, given any $x>0$. Therefore,
\begin{equation*}
\Pr\left[\sum\nolimits_{j=1}^{n}a_{ij}\le x\right] = \Pr\left[\sum\nolimits_{j \neq j_1,j_2} a_{ij}\le x-1\right] = \frac{1}{(n-2)!}\sum_{k=0}^{\lfloor x-1\rfloor}(-1)^{k}\binom{n-2}{k}(x-1-k)^{n-2}
\end{equation*}
where $\lfloor \cdot \rfloor$ is the floor function. So,
\begin{align}
& \Pr\left[\sum_{j=1}^{n}a_{ij}\leqslant n^{c}\right] =\frac{1}{(n-2)!}\sum_{k=0}^{\lfloor n^{c}-1\rfloor}(-1)^{k}\binom{n-2}{k}(n^{c}-1-k)^{n-2} \nonumber \\
& \le \frac{1}{(n-2)!}\sum_{k=0}^{\lfloor n^{c}-1\rfloor}\binom{n-2}{k}(n^{c})^{n-2} 
 \le \frac{1}{(n-2)!}\sum_{k=0}^{n-2}\binom{n-2}{k}n^{c(n-2)} \nonumber  \\
& = \frac{1}{(n-2)!}\cdot (2n^{c})^{n-2} 
 \le \frac{(2n^{c})^{n-2}}{\sqrt{2\pi(n-2)}\cdot \left(\frac{n-2}{\mathrm{e}}\right)^{n-2}} \label{Stirling} \\
& =\frac{(2n^{c})^{n-2}}{\sqrt{2\pi n}\cdot \left(\frac{n}{\mathrm{e}}\right)^{n-2}}\cdot \frac{\sqrt{2\pi n}\cdot \left(\frac{n}{\mathrm{e}}\right)^{n-2}}{\sqrt{2\pi(n-2)}\cdot \left(\frac{n-2}{\mathrm{e}}\right)^{n-2}} \nonumber
 = \frac{1}{\sqrt{2\pi n}}\cdot \left(\frac{2\mathrm{e}}{n^{1-c}}\right)^{n-2}\cdot \left(1+\frac{2}{n-2}\right)^{n-2+\frac{1}{2}} \\ \nonumber
& \le \frac{1}{\sqrt{2\pi n}}\cdot \left(\frac{2\mathrm{e}}{n^{1-c}}\right)^{n-2}\cdot \mathrm{e}^{2}
 = \frac{\mathrm{e}^{2}}{\sqrt{2\pi n}}\cdot \left(\frac{2\mathrm{e}}{n^{1-c}}\right)^{n-2} \nonumber
\end{align}
where inequality (\ref{Stirling}) is due to Stirling's formula.

As we have shown in Lemma~\ref{lem:smoothed2}, in random priority mechanism,  agent $i$'s utility $u_{i}(\mathrm{A}) = \sum\nolimits_{j=1}^n a_{ij}x_{ij} \ge \frac{1}{n} \cdot \sum_{j=1}^{n}a_{ij}$. It implies that
$\Pr\left[u_{i}(\mathrm{A})>x\right] \ge \Pr\left[\frac{1}{n}\sum\nolimits_{j=1}^{n}a_{ij}>x\right]$. 
Therefore,
\begin{align}
& \Pr\left[SW_{RP}(\mathrm{A})> n^{c}\right]
=  \Pr\left[\sum_{i=1}^n u_{i}(\mathrm{A})> n^{c}\right] 
\ge \prod_{i=1}^{n}\Pr\left[u_{i}(\mathrm{A})>\frac{1}{n}\cdot n^{c}\right]  \nonumber \\
&\ge \prod_{i=1}^{n}\Pr\left[\frac{1}{n}\sum_{j=1}^{n}a_{ij}>\frac{1}{n}\cdot n^{c}\right] 
= \prod_{i=1}^{n}\Pr\left[\sum_{j=1}^{n}a_{ij}> n^{c}\right] 
=\prod_{i=1}^{n}\left(1-\Pr\left[\sum_{j=1}^{n}a_{ij}\leqslant n^{c}\right]\right) \nonumber \\
& \ge \left(1-\frac{\mathrm{e}^{2}}{\sqrt{2\pi n}}\cdot \left(\frac{2\mathrm{e}}{n^{1-c}}\right)^{n-2}\right)^{n} 
 \ge 1-n\cdot \frac{\mathrm{e}^{2}}{\sqrt{2\pi n}}\cdot \left(\frac{2\mathrm{e}}{n^{1-c}}\right)^{n-2} \label{Ber} \\
&= 1-\frac{\mathrm{e}^{2}}{\sqrt{2\pi}}\cdot\sqrt{n} \cdot \left(\frac{2\mathrm{e}}{n^{1-c}}\right)^{n-2} \nonumber
\end{align}
where inequality (\ref{Ber}) is due to Bernoulli's inequality and the condition that $\frac{\mathrm{e}^{2}}{\sqrt{2\pi n}}\cdot \left(\frac{2\mathrm{e}}{n^{1-c}}\right)^{n-2} < 1$. Hence,
\begin{equation*}
\Pr[SW_{RP}(\mathrm{A}) \le n^{c}] \le \frac{\mathrm{e}^{2}}{\sqrt{2\pi}} \cdot \sqrt{n} \cdot \left(\frac{2\mathrm{e}}{n^{1-c}}\right)^{n-2}.
\end{equation*}
\end{proof}
\vspace{-5mm}
Secondly, we lower bound the probability of the social welfare of random priority when it is lager than $\frac{n}{2}$.

\begin{lemma}\label{lem:half}
Given any $n$, we have
$\Pr[SW_{RP}(\mathrm{A}) \ge \frac{n}{2}] \ge \frac{1}{2}$.
\end{lemma}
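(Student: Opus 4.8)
The plan is to bound the expected social welfare of random priority from below and then apply a concentration or tail argument. Recall from the proof of Lemma~\ref{lem:smoothed2} that in random priority, for every agent $i$, the allocation vector stochastically dominates the uniform vector $(1/n,\ldots,1/n)$ after sorting items by $i$'s preference, so $u_i(\mathrm{A}) \ge \frac{1}{n}\sum_{j=1}^n a_{ij}$, and hence $SW_{RP}(\mathrm{A}) \ge \frac{1}{n}\sum_{i,j} a_{ij}$. In the unit-range setting each row has one entry equal to $1$ and one equal to $0$, and the remaining $n-2$ entries are i.i.d.\ $\mathrm{U}(0,1)$, so $\mathbb{E}\big[\sum_{i,j} a_{ij}\big] = n + n(n-2)\cdot\frac12 = \frac{n^2}{2}$, giving $\mathbb{E}[SW_{RP}(\mathrm{A})] \ge \frac{n}{2}$. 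This already shows the expectation is at least $n/2$, but the lemma asks for the median-type statement $\Pr[SW_{RP}(\mathrm{A})\ge n/2]\ge 1/2$, which needs a bit more.

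First I would make the lower bound on $SW_{RP}$ symmetric around $n/2$: the key observation is that $\frac{1}{n}\sum_{i,j}a_{ij}$ is, up to the deterministic contribution of the $n$ ones, a sum of $n(n-2)$ i.i.d.\ $\mathrm{U}(0,1)$ variables scaled by $1/n$; writing $a_{ij} = 1 - a_{ij}'$ for the free entries shows this random variable $Y := \frac{1}{n}\sum_{i,j}a_{ij}$ satisfies $Y + Y' = \frac{1}{n}\big(n + n(n-2)\big) = n - \frac{2n}{n} \cdot$ wait — more carefully, if $Y_0 = \frac1n\sum_{(i,j)\notin \mathrm{H}(\mathrm A)} a_{ij}$ then replacing each free $a_{ij}$ by $1-a_{ij}$ maps $Y_0 \mapsto \frac{n(n-2)}{n} - Y_0 = (n-2) - Y_0$, a measure-preserving involution. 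Hence $Y = 1 + Y_0 \cdot$ ... actually $SW_{RP}\ge \frac1n\sum_{i,j}a_{ij} = 1 + Y_0$ where the ``$1$'' comes from the $n$ ones divided by $n$; the involution sends $1+Y_0 \mapsto 1 + (n-2) - Y_0 = (n-1) - Y_0$. Since $\mathbb{E}[Y_0] = (n-2)/2$, the distribution of $1+Y_0$ is symmetric about $1 + (n-2)/2 = n/2$, so $\Pr[1+Y_0 \ge n/2] \ge 1/2$ by symmetry of a continuous symmetric distribution about its center. Combined with $SW_{RP}(\mathrm{A}) \ge 1 + Y_0$ this yields $\Pr[SW_{RP}(\mathrm{A}) \ge n/2] \ge \Pr[1+Y_0 \ge n/2] \ge 1/2$.

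The main obstacle is making the symmetry argument airtight: one must verify that the map $a_{ij}\mapsto 1-a_{ij}$ on the free coordinates is indeed measure-preserving for the product uniform law (which it is, since $\mathrm{U}(0,1)$ is symmetric about $1/2$), that it fixes the set $\mathrm{H}(\mathrm{A})$ of $0$/$1$ entries (which it does, since it only acts on non-$\mathrm{H}$ coordinates, though one should note $1-0 = 1$ and $1-1=0$ would swap them, so the involution must be applied only to the $n(n-2)$ genuinely random coordinates), and that continuity of the Irwin--Hall density rules out an atom at the center that could break the ``$\ge 1/2$'' bound. An alternative, if one wants to avoid the involution, is to invoke the median of the Irwin--Hall distribution directly: the sum of $m$ i.i.d.\ $\mathrm{U}(0,1)$ variables has median exactly $m/2$ by symmetry, so $\Pr[Y_0 \ge (n-2)/2] \ge 1/2$, giving the same conclusion. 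I would present the short symmetry/median argument as the proof, since the heavy lifting — the $1/n$ stochastic-dominance lower bound on $SW_{RP}$ — was already done in Lemma~\ref{lem:smoothed2}.
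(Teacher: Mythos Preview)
Your argument is correct and is essentially the paper's own proof: both use the bound $SW_{RP}(\mathrm{A}) \ge \frac{1}{n}\sum_{i,j}a_{ij}$ from Lemma~\ref{lem:smoothed2} and then a reflection $a_{ij}\mapsto 1-a_{ij}$ (equivalently $\mathrm{A}\mapsto \mathrm{E}-\mathrm{A}$) to conclude that $\frac{1}{n}\sum_{i,j}a_{ij}$ is symmetric about $n/2$, hence $\Pr\big[\frac{1}{n}\sum_{i,j}a_{ij}\ge n/2\big]=\tfrac12$. The only cosmetic difference is that the paper reflects all entries simultaneously (which swaps each row's $0$ and $1$ but still preserves the unit-range law), whereas you first peel off the deterministic ``$+1$'' contributed by the fixed entries and reflect only the $n(n-2)$ free coordinates; your stream-of-consciousness arithmetic is right, just clean it up.
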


\begin{proof}
Firstly, as we show in Lemma~\ref{lem:smoothed2}, the utility of agent $i$ in any instance $\mathrm{A}$ is $u_{i}(\mathrm{A}) = \sum\nolimits_{j=1}^n a_{ij}x_{ij} \ge \frac{1}{n} \cdot \sum_{j=1}^{n}a_{ij}$, and the social welfate $\mathrm{SW}_{\mathrm{RP}(\mathrm{A})}=\sum_i u_i(\mathrm{A}) \ge \frac{1}{n} \sum_{i,j} a_{ij}$. Therefore, event $\left\{ \sum_{i,j} a_{ij} \ge \frac{n^2}{2} \right\}$ implies $\left\{ \mathrm{SW}_{\mathrm{RP}(\mathrm{A})} \ge \frac{n}{2} \right\} $. So, we have $\Pr\left[ \sum_{i,j} a_{ij} \ge \frac{n^2}{2} \right] \le \Pr\left[\mathrm{SW}_{\mathrm{RP}}(\mathrm{A}) \ge \frac{n}{2}\right]$.

Secondly, let $\mathrm{E}$ be the $n\times n$ all-ones matrix. Denote matrix $\mathrm{A}' = (a'_{ij})_{n\times n} = \mathrm{E} - \mathrm{A}$. Obviously $\sum_{i,j} a_{ij} + \sum_{i,j} a'_{ij} = n^2$. Since all $a_{ij}$'s follow the uniform distribution, 
\begin{equation*}
\Pr\left[ \sum_{i,j} a_{ij} \ge \frac{n^2}{2} \right] = \Pr\left[ \sum_{i,j} a'_{ij} \ge \frac{n^2}{2} \right] = \Pr\left[ n^2 - \sum_{i,j} a_{ij} \ge \frac{n^2}{2} \right] = \Pr\left[ \sum_{i,j} a_{ij} \le \frac{n^2}{2} \right] .
\end{equation*}
So $\Pr\left[ \sum_{i,j} a_{ij} \ge \frac{n^2}{2} \right] = \frac{1}{2}$. Hence, $\Pr\left[\mathrm{SW}_{\mathrm{RP}}(\mathrm{A}) \ge \frac{n}{2}\right] \ge \Pr\left[ \sum_{i,j} a_{ij} \ge \frac{n^2}{2} \right] = \frac{1}{2}$.
\end{proof}

Next we will use Lemma \ref{lem:average} and Lemma~\ref{lem:half} to prove our main result in this section. Essentially, Lemma \ref{lem:average} and Lemma~\ref{lem:half} bound the probability of the social welfare of random priority on valuation profile $\mathrm{A}$. By carefully choosing parameter $c$, we can divide the valuation space into three sets: $\{SW_{RP(\mathrm{A})}\le n^c\}$, $\{n^{c}<SW_{RP(\mathrm{A})}<\frac{n}{2}\}$ and $\{SW_{RP(\mathrm{A})}\ge\frac{n}{2}\}$. We bound the probabilities of instance $\mathrm{A}$ falling into each set and the ratios of optimal social welfare against the social welfare of random priority mechanism. Note that in any cases, the worst-case ratio is upper bounded by $O(\sqrt{n})$. By adding them up together, we obtain our upper bound of $1+e$.
\begin{theorem}\label{thm:average}
The average case approximation ratio is upper bounded by  $1+e$. That is,
\begin{equation*}
r_{average} = \mathop{\mathbb{E}}_{a_{ij}\sim \mathrm{U}} \left[ \frac{SW_{OPT(\mathrm{A})}}{SW_{RP(\mathrm{A})}} \right] \le 1+e.
\end{equation*}
\end{theorem}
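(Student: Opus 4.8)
The plan is to decompose the expectation $\mathop{\mathbb{E}}_{a_{ij}\sim \mathrm{U}}\left[\frac{SW_{OPT(\mathrm{A})}}{SW_{RP(\mathrm{A})}}\right]$ over the three regions of the valuation space dictated by the value of $SW_{RP}(\mathrm{A})$: the "bad" region $\mathcal{R}_1=\{SW_{RP}(\mathrm{A})\le n^c\}$, the "medium" region $\mathcal{R}_2=\{n^c< SW_{RP}(\mathrm{A})<\frac{n}{2}\}$, and the "good" region $\mathcal{R}_3=\{SW_{RP}(\mathrm{A})\ge\frac{n}{2}\}$, for a carefully chosen $c\in(0,1)$. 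In each region I bound the integrand and the probability mass separately and add the three contributions. On $\mathcal{R}_3$ the ratio is at most $\frac{SW_{OPT}(\mathrm{A})}{n/2}\le \frac{n}{n/2}=2$ (using $SW_{OPT}(\mathrm{A})\le n$ in the unit-range setting), so that region contributes at most $2$. On $\mathcal{R}_2$ the ratio is at most $\frac{n}{n^c}=n^{1-c}$, and on $\mathcal{R}_1$ the ratio is always at most the worst-case bound $O(\sqrt{n})$ from \cite{FFZ:14}; by Lemma~\ref{lem:average} the probability mass of $\mathcal{R}_1$ is at most $\frac{\mathrm{e}^2}{\sqrt{2\pi}}\sqrt{n}\left(\frac{2\mathrm{e}}{n^{1-c}}\right)^{n-2}$, which multiplied by $O(\sqrt{n})$ is negligible (it goes to $0$) once $n^{1-c}>2\mathrm{e}$, i.e. once $c$ is bounded away from $1$.

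Concretely I would write
\begin{align*}
r_{average} &= \mathop{\mathbb{E}}\left[\frac{SW_{OPT}(\mathrm{A})}{SW_{RP}(\mathrm{A})}\,\Big|\,\mathcal{R}_1\right]\Pr[\mathcal{R}_1]
+ \mathop{\mathbb{E}}\left[\frac{SW_{OPT}(\mathrm{A})}{SW_{RP}(\mathrm{A})}\,\Big|\,\mathcal{R}_2\right]\Pr[\mathcal{R}_2] \\
&\quad + \mathop{\mathbb{E}}\left[\frac{SW_{OPT}(\mathrm{A})}{SW_{RP}(\mathrm{A})}\,\Big|\,\mathcal{R}_3\right]\Pr[\mathcal{R}_3] \\
&\le O(\sqrt{n})\cdot \frac{\mathrm{e}^2}{\sqrt{2\pi}}\sqrt{n}\left(\frac{2\mathrm{e}}{n^{1-c}}\right)^{n-2}
+ n^{1-c}\cdot \Pr[\mathcal{R}_2] + 2\cdot \Pr[\mathcal{R}_3].
\end{align*}
Now I would pick $c$ so that $n^{1-c}\to e$ as $n\to\infty$ — that is, $1-c=\frac{1}{\ln n}$, equivalently $n^{1-c}=e$ — which keeps the first term tending to $0$ (since $2\mathrm{e}/n^{1-c}=2<\mathrm{e}$... wait, need $2\mathrm{e}/e = 2 < 1$ to fail, so instead take $1-c=\frac{1+\ln 2+\varepsilon}{\ln n}$ so that $n^{1-c}=2\mathrm{e}\cdot n^{\varepsilon/\ln n}\cdot$ something $>2\mathrm{e}$; more cleanly, choose $1-c$ slightly larger so $n^{1-c}\to e\cdot$const with the base $\frac{2\mathrm{e}}{n^{1-c}}<1$, and let that constant $\to e$). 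The upshot: $n^{1-c}\to e$, $\Pr[\mathcal{R}_2]\le 1$, $\Pr[\mathcal{R}_3]\le 1$ trivially, and the $\mathcal{R}_1$ term vanishes, giving $r_{average}\le e\cdot 1 + 2\cdot\ldots$ — but that overshoots $1+e$, so the sharper accounting is: on $\mathcal{R}_2\cup\mathcal{R}_3$ together the probability is $\le 1$, the ratio is $\le n^{1-c}$ on $\mathcal{R}_2$ and $\le 2\le n^{1-c}$ eventually; actually the clean bound is $r_{average}\le n^{1-c}\cdot 1 + o(1) + (\text{correction on }\mathcal{R}_3)$. The intended split must be: ratio $\le n^{1-c}$ on all of $\mathcal{R}_2\cup\mathcal{R}_3=\{SW_{RP}>n^c\}$ when $n^{1-c}\ge 2$, so $r_{average}\le n^{1-c}\cdot\Pr[SW_{RP}>n^c] + O(\sqrt n)\cdot\Pr[SW_{RP}\le n^c]\le n^{1-c}+o(1)$, and then additionally on $\mathcal{R}_3$ (which has mass $\ge \frac12$ by Lemma~\ref{lem:half}) the ratio is only $\le 2$, improving the bound to $n^{1-c}\cdot\frac12 + 2\cdot\frac12 + o(1)$; choosing $n^{1-c}\to 2e$ gives $e+1+o(1)$. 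So I would set $1-c$ via $n^{1-c}=2\mathrm{e}$ (i.e. $c=1-\frac{\ln(2\mathrm{e})}{\ln n}$), check that $\frac{2\mathrm{e}}{n^{1-c}}=1$ is borderline and instead take $n^{1-c}$ slightly above $2\mathrm{e}$ so Lemma~\ref{lem:average}'s hypothesis holds and the base is $<1$, then let this slack vanish as $n\to\infty$.

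The main obstacle is the bookkeeping around the threshold: I need $n^{1-c}$ to converge to exactly $2\mathrm{e}$ from above so that (i) the condition $\frac{\mathrm{e}^2}{\sqrt{2\pi n}}\left(\frac{2\mathrm{e}}{n^{1-c}}\right)^{n-2}<1$ of Lemma~\ref{lem:average} is met, (ii) the geometric factor $\left(\frac{2\mathrm{e}}{n^{1-c}}\right)^{n-2}\to 0$ fast enough to kill the $O(\sqrt{n})\cdot\sqrt{n}$ prefactor in the $\mathcal{R}_1$ term, and (iii) $n^{1-c}\ge 2$ eventually so that $2\le n^{1-c}$ and the $\mathcal{R}_2$-and-$\mathcal{R}_3$ ratio bound $n^{1-c}$ is valid where I use it. Combining the Lemma~\ref{lem:half} lower bound $\Pr[\mathcal{R}_3]\ge\frac12$ with $\Pr[\mathcal{R}_2\cup\mathcal{R}_3]=\Pr[SW_{RP}>n^c]= 1-\Pr[\mathcal{R}_1]$ then yields $r_{average}\le \frac12\cdot n^{1-c} + \frac12\cdot 2 + (1-\Pr[\mathcal{R}_1])\cdot 0\text{-correction} + o(1)\to \frac12\cdot 2\mathrm{e}+1 = 1+e$ as $n\to\infty$. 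Everything else — the union bounds, $SW_{OPT}\le n$, the worst-case $O(\sqrt n)$ fallback — is routine once the choice of $c$ is pinned down.
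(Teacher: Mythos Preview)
Your proposal is correct and follows essentially the same approach as the paper. The paper uses exactly the three-region decomposition $\{SW_{RP}\le n^c\}$, $\{n^c<SW_{RP}<n/2\}$, $\{SW_{RP}\ge n/2\}$, bounds the ratio by $O(\sqrt{n})$, $n^{1-c}$, and $2$ respectively, applies Lemma~\ref{lem:average} and Lemma~\ref{lem:half}, and arrives at $\frac{1}{2}n^{1-c}+1+o(1)\to 1+e$. The ``bookkeeping around the threshold'' you flag is resolved in the paper by the explicit choice $1-c=\log_n(2\mathrm{e})+\frac{2}{n-2}$, i.e.\ $n^{1-c}=2\mathrm{e}\cdot n^{2/(n-2)}$, which makes $\left(\frac{2\mathrm{e}}{n^{1-c}}\right)^{n-2}=n^{-2}$ so that the $\mathcal{R}_1$ contribution is $O(\sqrt{n})\cdot\sqrt{n}\cdot n^{-2}=O(1/n)$, while $n^{1-c}\to 2\mathrm{e}$.
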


\begin{proof}
Let $1-c = \log_{n} (2e) + \frac{2}{n-2}$. It is easy to verify that $c$  satisfies the condition of Lemma \ref{lem:average} and $n^c < \frac{n}{2}$. So the above three sets are collectively exhaustive and mutually exclusive, and we have
\begin{align*}
& \mathop{\mathbb{E}}_{a_{ij}\sim \mathrm{U}} \left[ \frac{SW_{OPT(\mathrm{A})}}{SW_{RP(\mathrm{A})}} \right]
= \Pr\left[SW_{RP(\mathrm{A})}\ge\frac{n}{2}\right] \cdot \mathop{\mathbb{E}}_{_{SW_{RP(\mathrm{A})}\ge\frac{n}{2}}} \left[ \frac{SW_{OPT(\mathrm{A})}}{SW_{RP(\mathrm{A})}} \right] + \\
&\Pr\left[n^c<SW_{RP(\mathrm{A})}<\frac{n}{2}\right] \cdot \mathop{\mathbb{E}}_{_{n^c<SW_{RP(\mathrm{A})}<\frac{n}{2}}} \left[ \frac{SW_{OPT(\mathrm{A})}}{SW_{RP(\mathrm{A})}} \right]+ 
\Pr\left[SW_{RP(\mathrm{A})}\le n^c\right] \cdot \mathop{\mathbb{E}}_{_{SW_{RP(\mathrm{A})}\le n^c}} \left[ \frac{SW_{OPT(\mathrm{A})}}{SW_{RP(\mathrm{A})}} \right] \\
&\le \Pr\left[SW_{RP(\mathrm{A})}\ge\frac{n}{2}\right] \cdot \frac{n}{n/2}+
(1-\Pr\left[SW_{RP(\mathrm{A})}\ge\frac{n}{2}\right]) \cdot \frac{n}{n^c} + \Pr\left[SW_{RP(\mathrm{A})}\le n^c\right] \cdot O(\sqrt{n}) \\
&<n^{1-c} + (2-n^{1-c})\cdot\Pr\left[SW_{RP(\mathrm{A})}\ge\frac{n}{2}\right]+\frac{e^2}{\sqrt{2\pi}} \cdot \sqrt{n} \cdot \left( \frac{2e}{n^{1-c}} \right)^{n-2} \cdot O(\sqrt{n})
\end{align*}

By our choice of $c$, we have $n^{1-c}=2e\cdot n^{\frac{2}{n-2}}$. So,
\begin{align*}
r_{average} 
&< 2e \cdot n^{\frac{2}{n-2}} - (2e \cdot n^{\frac{2}{n-2}}-2)\cdot\Pr\left[SW_{RP(\mathrm{A})}\ge\frac{n}{2}\right] + \frac{e^2}{\sqrt{2\pi}} \cdot \sqrt{n} \cdot \left( \frac{2e}{2e \cdot n^{\frac{2}{n-2}}} \right)^{n-2} \cdot O(\sqrt{n})\\
&\le 2e \cdot n^{\frac{2}{n-2}} - (2e \cdot n^{\frac{2}{n-2}}-2)\cdot \frac{1}{2}+\frac{e^2}{\sqrt{2\pi}}  \cdot \frac{1}{n^2} \cdot O(n) \\
&=e\cdot n^{\frac{2}{n-2}} +1+ \frac{e^2}{\sqrt{2\pi}}  \cdot \frac{1}{n^2} \cdot O(n) <1+e.
\end{align*}
\end{proof}
\vspace{-5mm}
\section{Conclusion and Discussion}\label{sec:discussion}

In this paper, we extend the worst-case approximation ratio analysis in mechanism design to the smoothed approximation ratio and average-case approximation ratio analysis. For  social welfare maximization in one-sided matching problem, we show a clear separation of the approximation ratio bounds from the  smoothed analysis and average-case analysis to the worst-case analysis. Our results partially explain why random priority has been successfully  used in practice, although in the worst case the optimal social welfare is $\Theta{(\sqrt{n})}$ times of what random priority achieves.


There are quite a few emerging open questions in the smoothed and average-case approximation ratio analysis of mechanisms. Firstly, it would be good to improve our upper bounds and to characterize matching lower bound. Secondly, our average-case analysis is based on a uniform distribution; it is open to consider other distributions that resembles real-life applications of the one-sided matching mechanisms. Thirdly, beside unit-range representation, another interesting valuation normalization is unit-sum; it is open to study the smoothed  and average-case analysis in that setting. Last but most importantly, our analysis pave the way of characterizing the smoothed and average-case  approximation ratio in other mechanism design problems, such as the scheduling problem.

\bibliographystyle{plainnat}
\bibliography{refs}

\end{document}